\newtheorem{theorem}{Theorem}
\newtheorem{lemma}[theorem]{Lemma}
\newtheorem{definition}{Definition}
\newtheorem{remark}[theorem]{Remark}
\newcommand{\Z}{\mathbb{Z}}
\newcommand{\N}{\mathbb{N}}
\newcommand{\eps}{\varepsilon}
\newcommand{\cut}{Max-Cut\xspace}
\newcommand{\lin}{Max-2Lin($q$)\xspace}
\title{Inapproximability for Local Correlation Clustering \\ and Dissimilarity Hierarchical Clustering}
\author{Vaggos Chatziafratis\thanks{Google Research NY, vaggos@google.com and vaggos@cs.stanford.edu}\ \ \ \ \ \  Neha Gupta\thanks{Stanford University, nehagupta@cs.stanford.edu}\ \ \ \ \ \ Euiwoong Lee\thanks{University of Michigan, euiwoong@umich.edu}}
\begin{document}

\maketitle

\begin{abstract}
    We present hardness of approximation results for Correlation Clustering with local objectives and for Hierarchical Clustering with dissimilarity information. For the former, we study the local objective of Puleo and Milenkovic (ICML '16) that prioritizes reducing the disagreements at data points that are worst off and for the latter we study the maximization version of Dasgupta's cost function (STOC '16). Our $\mathbf{APX}$-hardness results imply that the two problems are hard to approximate within a constant of $\tfrac43\approx 1.33$ (assuming $\mathbf{P}\neq \mathbf{NP}$) and $\tfrac{9159}{9189}\approx 0.9967$ (assuming the Unique Games Conjecture)   respectively.
\end{abstract}
\section{Introduction}

Partitioning items based on pairwise similarity or dissimilarity information has been a ubiquitous task in machine learning and data mining with many different variants across sciences, depending on the form of the provided data and the desired output. For example, one of the earliest formulations in clustering is Lloyd's $k$-means objective~\cite{lloyd1982least}\footnote{This work was already written back in 1957 as a Bell labs report, but a formal publication took place in 1982.}, which was a major step towards precise ways of evaluating candidate solutions, similar to $k$-median or $k$-center and other graph $k$-cut objectives, where $k$ denotes the number of clusters in the final partition. A disadvantage shared by such $k$-partitioning formulations of clustering is the parameter $k$ itself: as $k$ may be unknown or dynamically changing with the collected data, we would like to alleviate the requirement of specifying a fixed number of clusters a priori and design non-parametric alternatives. Two established and well-studied such approaches are \textit{Correlation Clustering} and \textit{Hierarchical Clustering}.

\subsection{Correlation Clustering with Local Objectives}

In Correlation Clustering (CC)~\cite{bansal2004correlation}, we are given a graph $G$ on $n$ nodes, whose edges are labelled as $``+$'' or $``-$'' representing whether two items are similar or dissimilar respectively. In the original formulation, the goal is to produce a clustering respecting the edge labeling as much as possible, i.e., positive
edges should lie in the same cluster, whereas negative edges should be placed across different clusters. We say an edge is in \textit{disagreement} or is  \textit{misclassified} or is an \textit{error}, if it is positive yet placed across clusters, or if it is negative yet placed within a cluster. Notice that if in the optimum solution, no edge is in disagreement then the problem is trivial: simply output the connected components formed by the endpoints of $``+$" edges. Generally, no such perfect clustering exists so we want to find a partition that approximates\footnote{All approximation factors stated in this paper are multiplicative with respect to an optimum solution.} the optimum. An advantage of this formulation is that the
number of clusters is not predefined (bypassing the need to specify $k$) and naturally it has been extremely useful both in theory~\cite{bansal2004correlation,swamy2004correlation,charikar2005clustering,ailon2008aggregating,chawla2015near}) and in practice, e.g., in spam filtering~\cite{bonchi2014correlation,ramachandran2007filtering}, image segmentation~\cite{kim2011higher} and co-reference resolution~\cite{cohen2001learning,cohen2002learning, elmagarmid2006duplicate}.

The objective in CC is to find a partition $P$ that minimizes the total number of errors, however recent studies focus on a broader class of objectives sometimes referred to as the \emph{local} version of CC~\cite{puleo2016correlation} or CC with local guarantees~\cite{charikar2017local,kalhan2019correlation}.  Local-CC aims to bound the number of errors (misclassified edges) incident on any node, in other words, aims to reduce errors at nodes that are worst-off in the partition, and as such it appears in the context of fairness in machine learning~\cite{ahmadi2020fair}, bioinformatics~\cite{cheng2000biclustering}, community detection
without antagonists (nodes largely different than their
community), social sciences, recommender systems and more~\cite{kriegel2009clustering,symeonidis2006nearest}.

More specifically, given a partition $P$, let its \emph{disagreements vector}  be the $n$-dimensional vector indexed by
the nodes whose $i$-th coordinate equals the number of disagreements (induced by $P$) at node $i$. Local-CC asks to minimize the $\ell_q$-norm ($q\ge 1$) of the disagreements vector. Observe that classical CC~\cite{bansal2004correlation}  then simply corresponds to $\ell_1$-minimization. 
In terms of algorithmic results for the $\ell_q$ ($q\ge 1$) version, when the graph is complete, a polynomial time $48$-approximation was proposed in~\cite{puleo2016correlation} and was later improved to a factor $7$-approximation by Charikar et al.~\cite{charikar2017local}, who also gave
an $O(\sqrt n)$-approximation for $\ell_{\infty}$-minimization on arbitrary graphs (known as Min Max CC). The current best for complete graphs is a $5$-approximation and for general graphs an $O(n^{\tfrac12-\tfrac{1}{2q}}\log^{\tfrac12+\tfrac{1}{2q}} n)$-approximation (for $1 \le q < \infty$) by~\cite{kalhan2019correlation}. Here we complement positive approximation results by giving the first $\mathbf{APX}$-hardness for the $\ell_\infty$ version of local-CC:

\begin{theorem}
It is $\mathbf{NP}$-hard to approximate the $\ell_\infty$ version of Local Correlation Clustering within a factor of $\tfrac43$ (even on complete graphs).
\end{theorem}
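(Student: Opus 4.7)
The plan is to prove $\tfrac{4}{3}$-inapproximability via a gap-preserving reduction from a canonical NP-hard constraint satisfaction problem, most likely a graph-coloring variant (for example, distinguishing $3$-colorable graphs from those whose every $3$-coloring miscolors a constant fraction of edges) or a Max-Cut/Max-$k$-Cut instance with an explicit NP-hard gap. The constant $\tfrac{4}{3}$ strongly suggests a ``three versus four'' mechanism, where the YES case admits a $3$-class partition of $K_N$ and the NO case forces some vertex into a fourth mode (an extra cluster, an overloaded cluster, or a violated constraint absorbed at a node).

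Given a source instance $I$ on $n$ items, I would construct a labeled complete graph $K_N$ using two kinds of gadgets. A \emph{variable block} represents each item of $I$ as $M$ pairwise $+$-labeled vertices, so that any clustering breaking a block in two pays $\Omega(M)$ disagreements at some block vertex; this pins each block to a single cluster in every near-optimal clustering. A \emph{constraint gadget} places $\pm$ labels between the blocks corresponding to each constraint of $I$, chosen so that valid assignments correspond to low-load clusterings. All remaining pairs get a default label, chosen to contribute a uniform additive amount to every vertex's disagreement count so that the gadget signal is not drowned out.

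In the completeness direction, I would take a valid assignment/coloring of $I$ and form the clustering of $K_N$ by grouping blocks assigned the same label; a direct count shows that every vertex incurs at most $d$ disagreements for a parameter $d$ determined by the construction. In the soundness direction, I would argue the contrapositive: from any clustering of $K_N$ with max disagreement strictly less than $\tfrac{4d}{3}$, majority-decode each variable block into an assignment for $I$, and use a pigeonhole/averaging argument to show that at most an $o(1)$-fraction of constraints of $I$ are violated, contradicting the NO-case hardness.

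The main obstacle is tuning the gadget so that the gap is exactly $\tfrac{4}{3}$. A naive translation of the $3$-vs-$4$-coloring idea into complete-graph $\ell_\infty$-CC yields only a $\tfrac{9}{8}$ ratio, since balanced $3$-class and $4$-class partitions give per-vertex $+$-loads of $\tfrac{2N}{3}$ and $\tfrac{3N}{4}$ respectively, whose quotient is $\tfrac{9}{8}$. To reach $\tfrac{4}{3}$, I would either calibrate the density of the source instance so that the vertex-degree term partially cancels the $\tfrac{2N}{3}$ baseline (pushing the ratio up), or introduce a second layer of constraint gadget that charges one extra $\pm$-disagreement at a distinguished vertex per violated constraint. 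A secondary technical task is to choose the block size $M$ large enough to absorb integrality and decoding losses in the soundness analysis while keeping $|K_N|$ polynomial in $n$.
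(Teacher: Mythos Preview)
Your proposal identifies the right ``three versus four'' intuition but does not contain a proof: you explicitly flag that you do not know how to push the ratio from $9/8$ to $4/3$, and the two fixes you suggest (density calibration, a second gadget layer) are not worked out. For the $\ell_\infty$ objective this is the entire difficulty --- a single overloaded vertex decides the value, so averaging and majority-decoding over large blocks of size $M$ do not obviously localize a disagreement of size $\geq 4d/3$ at one vertex. In particular, your soundness sketch (``decode a low-max clustering into an assignment violating $o(1)$ of the constraints'') is an $\ell_1$-style argument; it does not by itself exhibit the required bad vertex.

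The paper avoids these issues by making two different choices. First, the source problem is the \emph{exact} decision version of hypergraph $2$-colorability on $3$-uniform hypergraphs (YES: $2$-colorable; NO: not $2$-colorable), not an approximate coloring or \textsc{Max-Cut} gap. This removes the need for any $o(1)$-fraction averaging. Second, the gadget is not a big positive clique per variable but a sparse \emph{flower} per hypergraph vertex (a $2s_i$-cycle with inner and outer ``petal'' vertices) joined by a small \emph{bouquet} per hyperedge (two extra edges $\alpha,\beta$ connected to the outer petals). All present edges are labelled $+$ and all absent pairs $-$. The point is that every vertex in this construction has degree at most $5$, so the per-vertex disagreement count is a small integer and one can do an exhaustive structural case analysis: any clustering with $\leq 3$ mistakes per vertex forces each flower into one of two rigid ``diamond'' patterns (odd or even), and the bouquet forbids all three flowers of a hyperedge from choosing the same parity. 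Reading off parity gives a valid $2$-coloring. Conversely, a valid $2$-coloring yields a clustering with exactly $3$ mistakes per vertex. Hence the gap is literally $3$ versus $4$, with no tuning.

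The missing idea in your plan is precisely this low-degree, rigidly-structured gadget that turns the soundness argument into a finite case check rather than an averaging argument; large homogeneous blocks work against you here because they create high-degree vertices whose disagreement counts scale with $N$ and wash out the integer gap.
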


\subsection{Hierarchical Clustering} 




Hierarchical Clustering (HC) is another fundamental problem in data analysis that does not require knowing the number $k$ of desired clusters in advance as it generates a hierarchy of clusters. Given $n$ items with their pairwise dissimilarities or similarities, the output of HC is a rooted tree $T$ with $n$ leaves that are in one-to-one correspondence with the items and with its internal nodes capturing intermediate groupings. Notice that in HC, all items form initially a cluster at the root, and successively smaller and smaller clusters are formed at internal nodes as we move towards the leaves, which should be thought of as singleton clusters. The goal in HC is to respect the given pairwise relationships as much as possible, e.g., by separating dissimilar items in the beginning (near the root) and maintaining similar items together for as much as possible (separating them close to the leaves). HC arises in various applications as data usually exhibits hierarchical structure. It originated in biology and phylogenetics~\cite{sneath1973numerical,felsenstein2004inferring} and since then can be found in cancer gene sequencing~\cite{sorlie2001gene, sotiriou2003breast}, text/image analysis~\cite{steinbach2000}, community detection~\cite{leskovec2019mining} and more. 

Most work on HC has been traditionally focused on proposing heuristics for HC (e.g., single or average linkage and other bottom up agglomerative processes like Ward's method~\cite{ward1963hierarchical}) and despite its importance, a formal understanding of HC was hindered by lack of well-posed objectives analogous to $k$-means for standard clustering. In a first attempt towards evaluating candidate hierarchical trees, Dasgupta and Long~\cite{dasgupta2005performance} proposed to compare $k$-partitions obtained by pruning the tree against the optimal $k$-center placement, for multiple values of $k$; extensions to $k$-median and relations to so-called incremental clusterings are shown in~\cite{charikar2004incremental,lin2010general}. Moreover, a recent analysis of Ward's method shows that it finds good clusterings for all levels of granularity that contain a meaningful
decomposition~\cite{grosswendt2019analysis}.  Despite those efforts, no “global” objective function was associated with the final tree output. To help address this, Dasgupta~\cite{dasgupta2016cost} introduced a discrete cost function over the space of  trees with $n$ leaves and showed that low-cost trees correspond to good hierarchical partitions in the data. Overall, Dasgupta's work ignited an objective-oriented perspective to HC with several approximation results shedding light to old algorithms like average linkage~\cite{moseley2017approximation} and designing new algorithms based on tools like Sparsest/Balanced-Cut~\cite{charikar2017approximate} (and their extensions for incorporating ancestry constraints in the tree~\cite{chatziafratis2018hierarchical}), semidefinite programs~\cite{charikar2019hierarchical} and random projections~\cite{charikar2019euclidean}, Densest-Cut~\cite{cohen2019hierarchical} and Max Uncut Bisection~\cite{ahmadian2019bisect,alon2020hierarchical} (see~\cite{chatziafratis2020hierarchical} for a survey). Finally, his formulation has recently been fruitful in continuous optimization over euclidean and hyperbolic spaces, where his 
objective is used to inform gradient-descent towards accurate embeddings of the leaves~\cite{HypHC,monath2019gradient,monath2017gradient,chierchia2019ultrametric}.

Specifically, given a weighted graph $G = (V, E, w)$, where $w\ge 0$ denotes similarity (the larger the weight the larger the similarity) Dasgupta phrased HC as the following cost minimization problem:
\begin{equation}
   \min_{\mbox{trees } T} \sum_{(i, j) \in E} w_{i, j} |T_{i,j}|\label{eq:obj}
\end{equation}
where $T_{i,j}$ denotes the subtree rooted at the lowest common ancestor of $i,j$ in $T$ and $|T_{i,j}|$ simply denotes the number of \emph{leaves} that belong to $T_{i,j}$. Notice that this objective intuitively captures the goal of HC which is to maintain similar items together for as much as possible, since separating them close to the leaves leads to lower values of this cost function. Indeed, Dasgupta showed that whenever there is a planted ground-truth clustering (e.g., stochastic block models), the tree that minimizes (\ref{eq:obj}) will recover it. Moreover, Cohen-Addad et al.~\cite{cohen2017hierarchical} proved analogous recovery results for suitably defined \emph{hierarchical} stochastic block models and Roy and Pokutta~\cite{roy2017hierarchical} used experimentally the newly-proposed cost function to obtain clusterings that correspond better to the underlying ground truth compared to those found by linkage methods. 

Perhaps not surprisingly, optimizing objective (\ref{eq:obj}) is a difficult task so the focus becomes to understand the approximability of the problem. First of all, it is known to be an $\mathbf{NP}$-hard problem~\cite{dasgupta2016cost} and actually no constant factor approximation is possible in polynomial time under certain complexity assumptions~\cite{charikar2017approximate,roy2017hierarchical}. For the complement to Dasgupta's cost studied in~\cite{moseley2017approximation}, an $\mathbf{APX}$-hardness result is provided in~\cite{ahmadian2019bisect}.

\paragraph{Dissimilarity Hierarchical Clustering} Here we focus on the \emph{dissimilarity} HC objective by Cohen-Addad et al.~\cite{cohen2019hierarchical} which is useful when the given weights denote dissimilarities instead of similarities. The formulation is the same as Dasgupta's, but now instead of minimization, the objective is phrased as a maximization\footnote{We restrict to binary trees as otherwise the problem would be trivial simply by outputting a root node with $n$ children.} problem:
\begin{equation}
  \max_{\mbox{binary tree } T} \sum_{(i, j) \in E} w_{i, j} |T_{i,j}| \label{eq:obj2}
\end{equation}
The current best approximation is given in~\cite{charikar2019hierarchical} and is a two-step algorithm based on Max-Cut~\cite{goemans1995improved} that finds a tree of value at least $0.667$ times that of the optimum tree as measured by (\ref{eq:obj2}). However, no hardness was known for this problem. Here, we show the following $\mathbf{APX}$-hardness result (even on graphs with $0$-$1$ edge weights) under Khot's Unique Games Conjecture (UGC)~\cite{khot2002power}:

\begin{theorem}
It is $\mathbf{UGC}$-hard to approximate Dissimilarity Hierarchical Clustering within a factor of $\tfrac{9159}{9189}\approx 0.9967$ (even on unweighted graphs).
\end{theorem}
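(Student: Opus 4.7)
The plan is to reduce a $\mathbf{UGC}$-hard gap instance of $\cut$ (via the Khot-Kindler-Mossel-O'Donnell theorem) to dissimilarity HC on unweighted graphs. Given the $\cut$ instance $G=(V,E)$ with $|V|=n$, $|E|=m$, take the HC instance to be on the same $n$ vertices as leaves with the same edges and $0/1$ weights.

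The analytical backbone is a \emph{triple decomposition} of the objective. For a binary tree $T$ and an unordered triple $\tau=\{a,b,c\}$, $T$ determines a unique ``separated'' leaf $x_T(\tau)\in\tau$, namely the singleton at the triple's LCA. Writing $|T_{i,j}|=2+|\{k\ne i,j: k\in T_{i,j}\}|$ and exchanging sums yields
\[
  \sum_{(i,j)\in E}|T_{i,j}| \;=\; 2m \;+\; \sum_{\tau\in\binom{V}{3}} d_\tau\bigl(x_T(\tau)\bigr),
\]
where $d_\tau(v)$ counts the $G$-edges from $v$ to $\tau\setminus\{v\}$. So maximizing HC reduces to choosing, for each triple, a separator vertex that maximizes the number of incident $G$-edges within $\tau$, subject to the constraint that the family of choices is realizable by a single binary tree.

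For completeness, given a cut $(S,\bar S)$ of $G$ of value $cm$, I would take the tree with root split $(S,\bar S)$ and balanced binary subtrees on each side. Every triple with two vertices on one side of the cut and one on the other forces the lone vertex as its separator, so a direct count shows that the top-level contribution is exactly $(n-2)|E(S,\bar S)|$; adding the $2m$ baseline and a recursive lower bound on within-side triples gives a completeness bound $C(c)\cdot nm$. For soundness, I would upper bound $\sum_e|T_e|$ for every binary tree $T$ via the per-internal-node form $\sum_v(|A_v|+|B_v|)\cdot|E(A_v,B_v)|$: at each internal node $v$ the split $(A_v,B_v)$ is a cut of $G[V_v]$ with $V_v:=A_v\cup B_v$, and by the soundness hypothesis contains at most an $s$-fraction of the edges of $G[V_v]$; summing against the coefficients $|T_v|$ produces an upper bound $S(s)\cdot nm$. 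The inapproximability ratio $S(s)/C(c)$, optimized against the KKMO threshold relating $c$ to $s$, yields the constant $\tfrac{9159}{9189}$.

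The main obstacle is the soundness side, since a tree may recover value at deeper levels that its top split sacrifices, and naive per-level bounds only give the trivial $n|E|$. To close this gap, I would (i) reduce from a blown-up or expander-like $\cut$ instance where hardness is robust under restriction to large induced subgraphs so that the $s$-bound holds uniformly at every internal node $v$, and (ii) use a charging argument that distributes the HC value to triples via the identity above and then upper bounds the resulting sum by a ``triple-CSP'' relaxation of the problem. Once the soundness bound is secured, the numerical constant $\tfrac{9159}{9189}$ emerges from the optimal calibration of $(c,s)$ against the KKMO threshold.
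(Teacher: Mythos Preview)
Your proposal has a genuine gap on the completeness side that prevents the argument from yielding any nontrivial inapproximability constant, let alone $\tfrac{9159}{9189}$.

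A \emph{single} good bisection $(S,\bar S)$ is not enough to build a tree with large dissimilarity HC value. If the top split cuts a $c$-fraction of the edges, those edges contribute $cn$ to the (normalized) objective, but the remaining $(1-c)$-fraction lies inside $S$ or $\bar S$, and for a generic \cut YES instance you have \emph{no} structural guarantee on those induced subgraphs. Your ``recursive lower bound on within-side triples'' is unsupported: the best generic bound available is the random-tree guarantee, giving each within-side edge roughly $\tfrac{2}{3}\cdot\tfrac{n}{2}=\tfrac{n}{3}$, for a total of about $cn+(1-c)\tfrac{n}{3}$. At the KKMO parameter $\rho=-0.7$ (so $c\approx 0.85$) this is $\approx 0.9n$, which is \emph{below} the soundness value $0.9159n$ that the analysis must establish --- so no gap survives. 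No calibration of $(c,s)$ rescues this, because the completeness ceiling $\tfrac{2c+1}{3}$ you can reach with one cut is always dominated by the soundness bound coming from the same $\rho$.

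The paper closes this gap by reducing not from \cut directly but from \lin. In the \lin YES case, an optimal assignment $\sigma$ and all $q$ of its shifts $\sigma+a$ are simultaneously near-perfect, and the KKMO long-code construction makes the corresponding $q$ bisections of $V$ behave independently across levels. One can then build a tree whose first $q$ levels each cut an $\alpha=(1-\rho)/2$ fraction of the surviving edges, yielding the geometric sum $\sum_{r\ge 1}\alpha\bigl(\tfrac{1-\alpha}{2}\bigr)^{r-1}=\tfrac{\alpha}{1-(1-\alpha)/2}\approx 0.9189$. This multi-level structure is precisely the missing idea; a black-box \cut instance cannot supply it.

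On the soundness side you are closer in spirit, but the property actually needed is not ``every cut is small''; it is the KKMO guarantee that every set $S$ with $|S|=\beta n$ \emph{induces} weight at least $\Gamma_{\rho}(\beta,\beta)$. The paper then does a short case analysis on the tree (whether some internal node has $\beta_v\in[0.6,0.88]$, and if not, on the two children of the lowest node with $\beta>0.88$) to cap the value at $0.9159n$. Your suggestion to ``blow up'' or use expander-like robustness gestures toward this but does not identify the invariant or carry out the bound. The triple decomposition you wrote is a valid identity, but neither side of the paper's argument goes through triples, and it does not by itself control the soundness sum.
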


\section{Inapproximability for Local Correlation Clustering}

In this section, we show that the $\ell_\infty$ version of Local Correlation Clustering (we simply refer to it as local-CC from now on) is $\mathbf{NP}$-hard to approximate within a $\tfrac43\approx 1.33$ factor even on complete graphs. 

\begin{theorem}\label{th:main}
Assuming $\mathbf{P}\neq \mathbf{NP}$, there is no polynomial time algorithm that can distinguish between the following two instances of Local Correlation Clustering (with $\ell_\infty$-norm) even on complete graphs:  
\begin{itemize}
    \item YES case: There is a clustering with value at most 3, i.e., at most 3 mistakes per vertex.

    \item NO case: Every clustering has value at least 4, i.e., at least 4 mistakes per vertex.
\end{itemize}
\end{theorem}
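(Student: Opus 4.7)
The plan is to prove Theorem \ref{th:main} via a polynomial-time gap-preserving reduction from a well-chosen $\mathbf{NP}$-hard combinatorial problem to Local Correlation Clustering on complete graphs. Since the gap $3$ vs $4$ is additive ``off by one'', the natural choice of source is a problem whose solutions admit a tight combinatorial witness, such as a variant of exact $3$-cover, perfect matching in $3$-uniform hypergraphs, $3$-coloring of bounded-degree graphs, or a Max-$2$Lin/Max-$3$SAT instance in its gap form. Given an instance of the source problem, I would construct a complete graph $H$ on a vertex set $V$ of size polynomial in the input, labeling every edge as either $+$ or $-$ in a structured way: each combinatorial object of the source (variable, clause, element, constraint) is represented by a small gadget whose internal edges are $+$ and whose external edges to other gadgets are mostly $-$. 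The clustering then ``reads off'' a candidate solution to the source problem.

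In the YES direction, given a valid solution to the source instance, I would produce an explicit clustering of $V$ and verify by direct counting that every vertex incurs at most $3$ disagreements. The budget accounting will be tight: for each vertex $v$, I would split its neighbors into (i) the ``canonical'' cluster it should join, (ii) a bounded number of unavoidable external $+$-neighbors from the gadget construction, and (iii) a bounded number of unavoidable internal $-$-neighbors. The gadget parameters (number of auxiliary vertices, internal edge pattern, etc.) will be chosen so that (ii)$+$(iii)$\leq 3$ for every $v$, including the auxiliary ones.

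In the NO direction, I would argue that if the source instance is unsatisfiable then every clustering of $V$ creates at least one vertex with $\geq 4$ disagreements. The typical technique is a local ``no cheap fix'' analysis combined with discharging: either a clustering respects the gadget structure (in which case it induces a would-be solution, contradicting unsatisfiability and forcing some vertex past the $3$-error threshold by exactly one), or it breaks some gadget apart (in which case a vertex of that gadget sees too many $+$-neighbors outside or too many $-$-neighbors inside). By carefully tuning gadget sizes, I will ensure that breaking the gadget costs at least $4$ on some vertex.

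The main obstacle is the tight calibration of the gadgets so that the gap is \emph{exactly} $3$ vs $4$ and the construction lives on a \emph{complete} graph. Because $H$ is complete, every pair must receive a sign, so even small gadgets interact globally through the $+$/$-$ assignments of inter-gadget pairs: this makes it easy to accidentally push the YES value above $3$ (a gadget vertex has too many far-away $+$-neighbors) or to let the NO value drop to $3$ (a merging of two gadgets only costs a vertex $3$ disagreements, not $4$). The technical heart of the proof will therefore be a small, highly symmetric gadget whose vertex degrees in the $+$ and $-$ subgraphs are prescribed so that the $\ell_\infty$ budget is saturated at $3$ in the YES case and inevitably violated by at least one unit in the NO case.
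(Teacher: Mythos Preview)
Your high-level strategy matches the paper's: a gap-preserving reduction from an $\mathbf{NP}$-hard source problem via gadgets, with a YES-case verification and a NO-case structural analysis. But the proposal is not a proof---every substantive step is deferred (``I would construct\ldots'', ``I would argue\ldots'') and no concrete choice is ever made. The paper's source problem is $2$-colorability of $3$-uniform hypergraphs (is there a $2$-coloring with no monochromatic hyperedge?), which is not on your candidate list, and the gadgets---a \emph{flower} per hypergraph vertex (a $2s_i$-cycle decorated with inner and outer petal vertices) and a \emph{bouquet} per hyperedge (two auxiliary edges $\alpha,\beta$ wired to outer petals of the three participating flowers)---are the entire content of the reduction. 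Designing them so that the $\ell_\infty$ budget is exactly $3$ in the YES case and forced to $4$ in the NO case is precisely the work you have not done.

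The deeper gap is the NO direction. ``Either the clustering respects the gadget structure or it breaks it'' is a slogan, not an argument: one must rule out \emph{every} clustering that keeps each vertex at $\leq 3$ mistakes, and an adversary has many non-obvious ways to mix gadget vertices across clusters. The paper accomplishes this through a chain of structural lemmas (Lemmas~\ref{lem:outer}--\ref{lem:notallthree}) that progressively pin down any $\leq 3$-mistake clustering: each outer petal must either sit in a four-vertex ``diamond'' with its two cycle neighbors and its inner petal, or join the $\alpha/\beta$ endpoints; this propagates around the cycle to force a global parity on each flower (all odd diamonds or all even diamonds), which encodes a $2$-coloring; and the bouquet then forces every hyperedge to be bichromatic. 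The inner-petal vertices---absent from the earlier gadget of~\cite{charikar2005clustering}---are exactly what make this exhaustive case analysis close. Without a concrete gadget and this level of case work, your NO-case argument is only a wish.
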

\noindent The proof of the theorem has several steps: We start from an instance of the $\mathbf{NP}$-hard {\sc Max 2-colorable degree 3-uniform hypergraph} problem and do a careful case analysis reducing it to the local-CC objective using a \textit{flower} and a \textit{bouquet} gadget.  A simpler version of our \textit{flower} gadget had previously been used in~\cite{charikar2005clustering}, however here we modify it by adding inner petal vertices (which we describe later); our \textit{bouquet} gadget is the same as in~\cite{charikar2005clustering}.

\paragraph{Coloring Hypergraphs:} Our reduction starts from {\sc Max 2-colorable bounded degree 3-uniform hypergraph} also used in~\cite{charikar2005clustering}: the input to this problem is a degree 3-uniform hypergraph $H = (V,S)$ where each hyperedge in $S = \{e_1,e_2, \cdots, e_m\}$ consists of exactly three elements of $V = \{v_1,v_2,\cdots, v_n\}$ with the added restriction that each element of $V$ occurs in at most $B$ hyperedges, for some absolute constant $B$ (so that $m \leq \frac{Bn}{3}$). The goal is to find a 2-coloring of $V$ that maximizes the number of hyperedges that are split by the coloring, i.e., they are bichromatic. It is known that for some absolute constants $\gamma > 0$ and $B$ (integer), given such a 3-uniform hypergraph, it is $\mathbf{NP}$-hard to distinguish between the YES case where $H$ is $2$-colorable (i.e., there exists a 2-coloring of the vertices under which no hyperedge is monochromatic) and the NO case where every 2-coloring of $V$ leaves at least $\gamma$ fraction of edges in $S$ monochromatic.

We shall prove that local-CC can capture an even harder version of the problem where there is no restriction on $B$ or $\gamma$. Formally, define the {\sc Max 2-colorable degree 3-uniform hypergraph} where the input is just a degree 3-uniform hypergraph $H = (V,S)$ as above, but with no parameters $B,\gamma$.

\begin{lemma}
It is $\mathbf{NP}$-hard to distinguish between the following two cases of {\sc Max 2-colorable degree 3-uniform hypergraph}:
\begin{itemize}
    \item YES case: $H$ is $2$-colorable.
    \item NO case: $H$ is not $2$-colorable, i.e., every $2$-coloring of $V$ leaves a hyperedge in $S$ monochromatic.
\end{itemize}
\end{lemma}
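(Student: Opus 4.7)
The approach is an \emph{identity reduction} from the known bounded-degree gap version to the unbounded, non-gap version stated in the lemma. Given an instance $H = (V,S)$ of {\sc Max 2-colorable bounded degree 3-uniform hypergraph} (with fixed absolute constants $B$ and $\gamma>0$), simply output the same $H$ as an instance of {\sc Max 2-colorable degree 3-uniform hypergraph}. This map is trivially polynomial time, so the only content is verifying that the YES/NO promises transfer correctly.

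\textbf{Case verification.} The YES side is immediate: if $H$ is $2$-colorable under the bounded promise, then it is $2$-colorable in the unbounded problem. For the NO side, suppose the bounded-NO promise holds, so that every $2$-coloring of $V$ leaves at least $\gamma |S|$ hyperedges monochromatic. We may assume $|S| \geq 1$ (otherwise $H$ is trivially $2$-colorable and cannot be a NO instance), and since $\gamma > 0$ is an absolute constant, we have $\gamma |S| > 0$. The number of monochromatic hyperedges is a nonnegative integer bounded below by $\gamma |S|$, hence it is at least $1$. Thus every $2$-coloring leaves some monochromatic hyperedge, which is exactly the NO promise of the unbounded version.

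\textbf{Main obstacle.} There is essentially no technical obstacle: the argument amounts to observing that the unbounded-NO condition is a logical weakening of the bounded-NO condition (via integrality of edge counts), while the two YES conditions are literally identical. Any polynomial-time distinguisher for the unbounded problem would therefore distinguish the bounded-degree gap problem in polynomial time, contradicting the $\mathbf{NP}$-hardness cited in the preceding paragraph. The lemma then follows.
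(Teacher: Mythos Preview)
Your proposal is correct and follows essentially the same approach as the paper: an identity reduction showing that the YES cases coincide and that every NO instance of the bounded-degree gap version is automatically a NO instance of the unbounded version. Your write-up is in fact more careful than the paper's, spelling out the integrality step that turns $\gamma|S|>0$ into at least one monochromatic hyperedge.
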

\begin{proof}
It is easy to see that any algorithm that could distinguish between the two cases of the {\sc Max 2-colorable degree 3-uniform hypergraph} problem, it would also distinguish between the two cases of the {\sc Max 2-colorable bounded degree 3-uniform hypergraph}, as the YES cases coincide, and every NO instance of the latter is a also a NO instance of the former.
\end{proof}

\paragraph{Constructing Flowers and Bouquets:} In our reduction, we construct a graph $G = (U,E)$ from the hypergraph instance $H = (V,S)$ using our \textit{flower} and  \textit{bouquet} gadgets. Firstly, for each vertex $v_i$ in the hypergraph, we construct a \textit{flower} structure $F_i$ with $6s_i$ vertices $U_i$, where $s_i$ is the number\footnote{If $s_i=1$, we can create a flower structure as if the vertex was in 2 hyperedges but without connecting  the extra outer petal vertices to anything. The same case analysis works in that case.} of hyperedges in which $v_i$ occurs (see also Figure~\ref{fig:flower}). The set $U_i$ consists of $2s_i$ vertices that form an induced cycle, and two pairs of $2s_i$ petal vertices each ($4s_i$ petals in total), that are adjacent to the two endpoints of the $2s_i$ cycle edges. Let $O_i$ ($E_i$) be the petal vertices drawn outside with odd (even) indices, according to an arbitrary cyclic ordering of the vertices as $1,2,\ldots, 2s_i$. Let $OI_i$ ($EI_i$) be the petal vertices drawn inside with odd (even) indices according to an arbitrary cyclic ordering of the vertices as $1,2,\ldots, 2s_i$. Secondly, consider a hyperedge   $e_j= (v_{j_1},v_{j_2},v_{j_3})$. To simplify presentation we set $j_1=1,j_2=2,j_3=3$ and ignore the subscript $j$. For our \textit{bouquet} gadget (see also Figure~\ref{fig:bouquet}), we create two independent edges $\alpha$ (with endpoints $A_1,A_2$) and $\beta$ (with endpoints $B_1,B_2$) in $G$. We add an edge from each endpoint $A_1,A_2$ to the vertex $O_{v_1}$ that corresponds to the occurrence of $v_{1}$ in $e$. Note that since $v_1$ participates in $s_{1}$ hyperedges, there are $s_{1}$ outside odd petals (e.g., $O_{v_1}$), and hence a different one can be used for each of the $s_{1}$ different hyperedges of $v_{1}$. The analogous edges are inserted between $A_1,A_2$ and the appropriate odd petals of the flowers corresponding to $v_2$ and $v_3$ (i.e., $O_{v_2}, O_{v_3}$). Finally, the endpoints  $B_1,B_2$ of the $\beta$ edge are similarly connected to the outside even petals $E_{v_1}$, $E_{v_2}$ and $E_{v_3}$. The instance for local-CC is then formed simply by labelling all edges of $G$ as positive and all non-existent edges as negative, thus obtaining the clique instance of the problem:

\begin{figure}
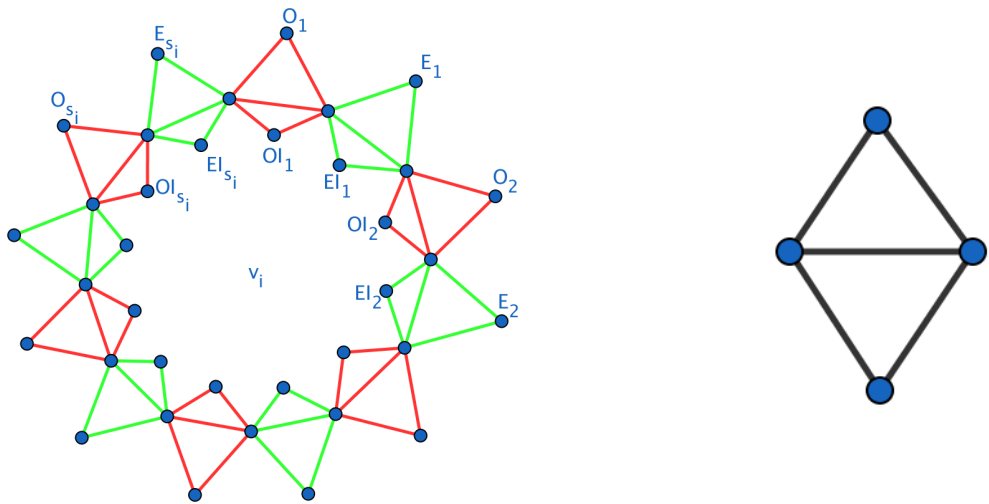

    \centering
    \begin{minipage}{0.5\textwidth}
        \centering
        \includegraphics[width=7cm]{flower.pdf}
        
    \end{minipage}\hfill
    \begin{minipage}{0.5\textwidth}
        \centering
        \includegraphics[width=3cm]{diamond.pdf}
        
    \end{minipage}
    \caption{(Left) The flower gadget of node $v_i$ that has $s_i$ hyperedges ($s_i=6$ in the figure). (Right) The diamond structure.}\label{fig:flower}\label{fig:diamond}
\end{figure}

\begin{figure}
\centering
  \includegraphics[width=12cm]{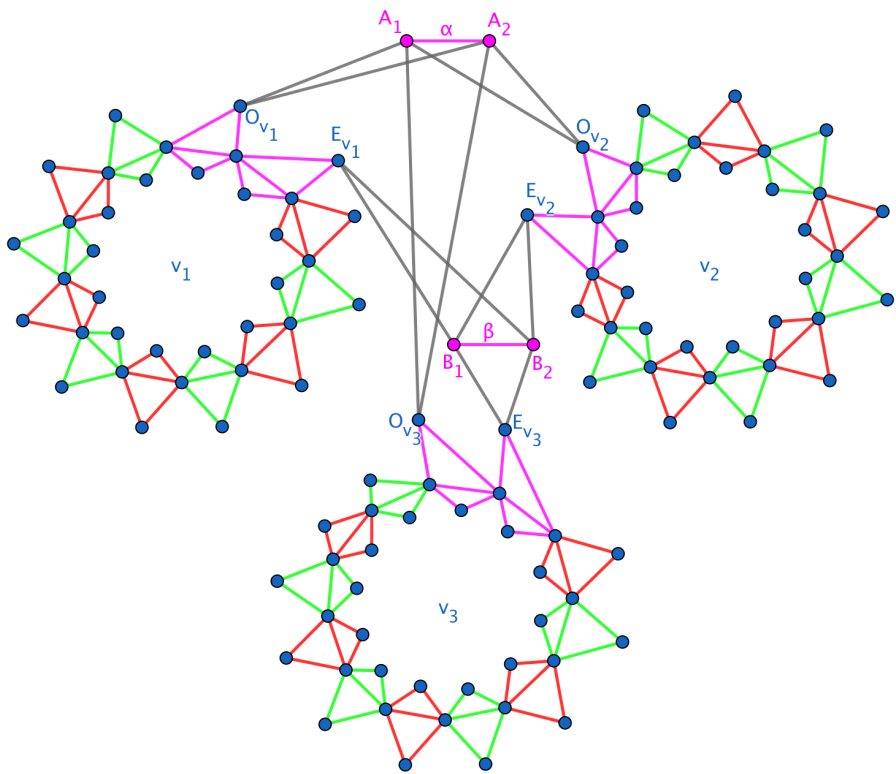}
  \caption{The bouquet gadget corresponding to a hyperedge on nodes $v_1,v_2,v_3$.}
  \label{fig:bouquet}
\end{figure}

\begin{lemma}\label{lem:main}
The hypergraph $H$ is 2-colorable if and only if the local-CC instance on $G$ has value at most $3$.
\end{lemma}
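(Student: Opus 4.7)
The plan is to prove each direction of the iff separately. The forward direction is constructive: given a 2-coloring of $H$, I build a clustering of $G$ and verify disagreements by a case check on vertex types. The backward direction is a structural analysis: I argue that any clustering of value $\leq 3$ must encode a valid 2-coloring, with a monochromatic hyperedge directly contradicting the $\leq 3$ bound.

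For the forward direction, fix a 2-coloring $c: V \to \{0, 1\}$. In each flower $F_i$, the color $c(v_i)$ selects one of the two perfect matchings on the $2s_i$-cycle (odd-indexed versus even-indexed cycle edges). For each cycle edge in the chosen matching I form a cluster containing its two endpoints together with the outer and inner petals attached to that edge; each inner petal sitting on a non-matching edge is absorbed into one of its two neighboring clusters by a fixed ``left/right'' convention. The outer petals on non-matching edges are not placed in any flower cluster; they are instead handed to the bouquet of the corresponding hyperedge. For each bouquet of a hyperedge $e = \{v_1, v_2, v_3\}$, create two clusters: one containing $\{A_1, A_2\}$ together with the odd outer petals of vertices of $e$ colored $1$, and one containing $\{B_1, B_2\}$ together with the even outer petals of vertices of $e$ colored $0$. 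Since $e$ is bichromatic, each such cluster receives at most two petals and has size at most four. A direct disagreement count over every vertex type (cycle vertex, outer petal, inner petal, $A_k$, $B_k$) then yields at most $3$ per vertex.

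For the backward direction, let $\mathcal{C}$ be a clustering with every vertex having at most $3$ disagreements. I first establish a \emph{flower rigidity} claim: for each $F_i$, the restriction of $\mathcal{C}$ to $F_i$ must, up to inner-petal reassignment, coincide with one of the two matching-based clusterings above; any other pattern creates some cycle vertex, outer petal, or inner petal with at least $4$ disagreements. This defines $c(v_i) \in \{0, 1\}$ according to which matching is used. I then argue that $c$ is a valid 2-coloring: if some hyperedge $e$ were monochromatic, say all color $0$, then all three even outer petals of $e$ are expelled from their flowers and must be accommodated at the bouquet of $e$. A short case check over their possible placements (collectively with $B_1, B_2$, split across the $\alpha$ and $\beta$ sides, or reattached to cycle neighbors) shows that some vertex always ends up with at least $4$ disagreements, because these three petals are mutually non-adjacent and each has two additional positive edges to cycle vertices of its own flower. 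This contradicts the assumption on $\mathcal{C}$, so every hyperedge of $H$ is bichromatic under $c$.

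The hard part is the flower rigidity claim. The inner petals, which are new in our gadget and have degree only $2$, are what pin down the clustering: any deviation from a matching-based partition either strands an inner petal in a cluster that already consumes its disagreement budget through non-edges, or splits a cycle edge so that some cycle vertex sees too many outside positive neighbors. Enumerating the possible local configurations inside $F_i$ and showing that each either matches one of the two canonical forms or violates the $\leq 3$ bound is the main combinatorial work of the proof; once this is in place, the bouquet step reduces to a small finite check over the $2^3$ colorings of $\{v_1, v_2, v_3\}$.
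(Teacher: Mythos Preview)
Your overall strategy matches the paper's: construct a clustering from a 2-coloring for the forward direction, and for the converse extract a coloring by showing each flower is forced into an ``odd'' or ``even'' diamond configuration, then derive a contradiction at a monochromatic hyperedge. The backward sketch is essentially the content of the paper's structural lemmas (Lemmas~\ref{lem:outer}--\ref{lem:notallthree}), so that part is on the right track.

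However, your forward construction contains a concrete error. You say that each inner petal on a non-matching cycle edge is ``absorbed into one of its two neighboring clusters.'' Take such an inner petal $R$ on the non-matching edge $(Q,Q')$, and suppose you absorb it into the diamond $\{Q,Q'',P,R'\}$ built on the matching edge $(Q,Q'')$. Inside that five-vertex cluster, $R$ is adjacent only to $Q$; it has three negative non-edges inside (to $Q'',P,R'$) plus one positive edge outside (to $Q'$), for a total of $4$ disagreements. Worse, the cycle vertex $Q''$, which already had exactly $3$ outside positive neighbors ($Q''',P'',R''$), now also gets a negative non-edge to $R$ inside its cluster, pushing $Q''$ to $4$ disagreements as well. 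So the clustering you describe does \emph{not} achieve value at most $3$.

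The fix is exactly what the paper does: leave each non-matching inner petal as a \emph{singleton} cluster. A degree-$2$ vertex in a singleton has only $2$ disagreements, and no other vertex is affected. With this change your forward construction coincides with the paper's (diamonds on the selected parity, singletons for the other parity's inner petals, and the bouquet clusters you already describe), and the $\leq 3$ bound goes through. Your ``up to inner-petal reassignment'' caveat in the backward direction suggests you sensed some slack here, but in the forward direction there is none: singleton placement is forced.
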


\noindent Lemma~\ref{lem:main} is the main technical component that allows us to connect the two problems and prove Theorem~\ref{th:main}. Before proceeding with the proof, we need to first derive a series of intermediate structural lemmas imposing constraints on any clustering of the graph $G$ having at most 3 mistakes per vertex. For the outer petals:

\begin{lemma}\label{lem:outer}
Let $C$ denote the cluster of any outer petal vertex (say $P3_1$). If $C$ does not contain any of $A_1,A_2,B_1,B_2$, then $C$ must be a diamond structure \footnote{The term \textit{diamond structure} denotes the clique on four nodes with a missing diagonal edge. For example, a diamond structure is formed by two vertices in the cycle edge together with its corresponding inner and outer petal as shown on right side of Figure~\ref{fig:diamond}.} containing the outer petal vertex ($P3_1$), its two polygonal neighbours in the flower structure ($Q3_1,Q3_2$) and the corresponding inner petal vertex ($R3_1$) to ensure at most $3$ mistakes.
\end{lemma}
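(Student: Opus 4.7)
The plan is to exhaust all clusters $C$ consistent with the hypothesis and with the bound of at most $3$ mistakes per vertex, and show only the diamond survives. The starting point is that the positive-degree of the outer petal $P3_1$ is exactly $4$: its only positive neighbours are its two polygonal neighbours $Q3_1,Q3_2$ on the cycle together with the two bouquet endpoints $A_1,A_2$ (or $B_1,B_2$ if $P3_1$ happens to be an even outer petal). Under the hypothesis that $C$ is disjoint from $\{A_1,A_2,B_1,B_2\}$, at least two of $P3_1$'s positive edges already leave $C$, so only one further mistake is affordable at $P3_1$.

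Concretely, I would set $k^{+} := |C \cap \{Q3_1,Q3_2\}|$ and write the mistake count at $P3_1$ as $(4-k^{+}) + (|C|-1-k^{+})$, where the first term counts positive edges leaving $C$ and the second counts negative edges (non-neighbours in $G$) staying inside $C$. The inequality $(4-k^{+}) + (|C|-1-k^{+}) \le 3$ simplifies to $|C| \le 2k^{+}$. This immediately rules out $k^{+}=0$, shows that $k^{+}=1$ forces $|C|\le 2$, and shows that $k^{+}=2$ forces $Q3_1,Q3_2\in C$ together with $|C|\le 4$.

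The next step is to rule out the small clusters using the cycle vertices, each of which has positive-degree exactly $6$ inside its flower (two cycle neighbours, two outer petals and two inner petals). If $C=\{P3_1,Q3_i\}$, then $Q3_i$ already incurs $5$ positive mistakes; and if $C=\{P3_1,Q3_1,Q3_2\}$, then either $Q3_j$ still pays $(6-2)+(2-2)=4$. So the only possibility is $|C|=4$ and $C=\{P3_1,Q3_1,Q3_2,x\}$ for some additional vertex $x$. To pin down $x$, the same mistake accounting at $Q3_1$ gives $(6-k_{Q3_1}^{+})+(3-k_{Q3_1}^{+})\le 3$, i.e.\ $k_{Q3_1}^{+}\ge 3$. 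Since $P3_1,Q3_2$ are already in $C$, the extra vertex $x$ must itself be a positive neighbour of $Q3_1$; applying the symmetric argument at $Q3_2$ forces $x$ to be a common positive neighbour of $Q3_1$ and $Q3_2$.

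The common positive neighbours of the two cycle-adjacent vertices $Q3_1,Q3_2$ are exactly the outer petal $P3_1$ and the inner petal $R3_1$ of the cycle edge $(Q3_1,Q3_2)$; the remaining candidates ($Q3_0,Q3_3$, the other outer/inner petals, any vertex outside the flower) each fail to be adjacent to at least one of $Q3_1,Q3_2$ and hence violate the $k^{+}\ge 3$ constraint. Since $P3_1$ is already in $C$, this forces $x=R3_1$ and $C=\{P3_1,Q3_1,Q3_2,R3_1\}$, exactly the claimed diamond. I do not foresee a serious obstacle here; the main care is enumerating correctly the positive neighbourhoods of the cycle vertices and handling the odd/even symmetry so that the argument applies uniformly whether $P3_1$ is attached to $\{A_1,A_2\}$ or $\{B_1,B_2\}$.
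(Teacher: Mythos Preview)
Your proof is correct and follows essentially the same case analysis as the paper. The paper also starts from the two forced mistakes at $P3_1$ (the $A_1,A_2$ edges), splits into the cases of how many of $Q3_1,Q3_2$ lie in $C$, and then uses the degree-$6$ constraint at the cycle vertices to force the diamond; your version simply packages the same reasoning into the clean inequality $|C|\le 2k^{+}$ and the symmetric bound $k_{Q3_i}^{+}\ge 3$, which makes the elimination of the small clusters and the identification of the fourth vertex $x$ slightly more transparent.
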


\begin{proof}
If cluster $C$ does not contain any of $A_1,A_2$ (see Figure~\ref{fig:detailed}), then vertex $P3_1$ already has two mistakes (corresponding to $A_1,A_2$). Thus, to ensure at most $3$ mistakes for vertex $P3_1$, cluster $C$ needs to have at least one of $Q3_1$ or $Q3_2$ (or both) in it since they are both neighbors of $P3_1$. 

Consider the case when we add only one of $Q3_1$ or $Q3_2$ to cluster $C$. Wlog, we can try to include $Q3_2$ in $C$ but not $Q3_1$. Now, even if we include $R3_1$ in this cluster, $P3_1$ has $3$ mistakes (corresponding to $A_1,A_2,Q3_1$) and $Q3_2$ has $4$ mistakes (corresponding to $ Q3_1, P3_2, Q3_3, R3_2$) and we cannot include any other neighbour of $Q3_2$ (except $Q3_1$ which is excluded by assumption) in the cluster $C$ since it would lead to an increase in mistakes for $P3_1$ (as they are non-neighbours of $P3_1$). Hence, a cluster like this is excluded. We can similarly argue for the case when we include $Q3_1$ in cluster $C$ but not $Q3_2$. 

The only remaining case is to include both $Q3_1$ and $Q3_2$ in the cluster $C$. If we do not include $R3_1$ in this cluster $C$, both $Q3_1$ and $Q3_2$ have different disjoint neighbours which are also not neighbours of $P3_1$; trying to include 1 neighbour for each of them, would lead to $P3_1$ having 4 mistakes. Hence, we would have to include $R3_1$ in the cluster $C$, which proves the lemma as no other vertex can be included and still obtain at most 3 mistakes. In other words, if an outer petal vertex does not have neighbouring vertices corresponding to $\alpha$ or $\beta$ edges in its cluster, the only possible option for it is to form a diamond with the corresponding polygonal vertices and the inner petal vertex.
\end{proof}

\begin{lemma}\label{lem:singleton}
Let $C$ be the cluster containing any outer petal vertex (say $P1_1$). If $C$ contains at least one of $A_1,A_2,B_1,B_2$, then 
\begin{itemize}
    \item[a)] $C$ cannot contain any other vertex from the flower structure corresponding to $P1_1$.
    \item[b)] The corresponding inner petal vertex $R1_1$ would form a singleton cluster.
    \item[c)] The corresponding polygonal vertices in the flower structure would form a diamond cluster, that is, there would be a cluster containing $Q1_2, Q1_3, P1_2, R1_2$ and similarly on the other side.
\end{itemize}

\end{lemma}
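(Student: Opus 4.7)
The plan is to argue by case analysis on the contents of $C$, exploiting the $\le 3$-mistakes-per-vertex budget to rule out every extension of $C$ beyond $P1_1$ together with a subset of bouquet vertices, and then to show that the leftover flower pieces snap into specific clusters. By the symmetry between $A_1$ and $A_2$ and between the $\alpha$ and $\beta$ edges, I would assume without loss of generality that $A_1 \in C$. If $P1_1$ is even-indexed (so that its bouquet neighbours are the $B_i$'s rather than the $A_i$'s), the same argument works after relabelling; the cases where $C$ contains a bouquet vertex that is not adjacent to $P1_1$ only strengthen the contradictions, since such a vertex already adds a non-neighbour mistake on $P1_1$ itself.

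For part (a), the uniform observation that drives everything is that any vertex $X$ in $v_1$'s flower other than $P1_1$ is a non-neighbour of $A_1$, because $A_1$'s only neighbours are $A_2$ and three designated outer petals sitting in three different flowers. So each such $X \in C$ spends one unit of $A_1$'s budget. I would split into three subcases. If $X = R1_1$ (degree $2$, with only neighbours $Q1_1, Q1_2$), I would exhaust the four choices of whether $Q1_1, Q1_2 \in C$: in each, either $R1_1$ directly exceeds three mistakes via its non-neighbours $P1_1, A_1$ together with missing polygonal neighbours, or else the $Q1_j$ added to rescue $R1_1$ itself overflows because it is a degree-$6$ vertex most of whose neighbours lie outside $C$. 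If $X \in \{Q1_1, Q1_2\}$, the degree-$6$ budget forces at least three of $X$'s neighbours into $C$; aside from $P1_1$ itself, these are all non-neighbours of $A_1$, pushing $A_1$ past four non-neighbour mistakes. For any other flower vertex $X$, $X$ is a non-neighbour of $P1_1$; by the two previous subcases $Q1_1, Q1_2, R1_1 \notin C$, so $P1_1$ collects three missing neighbours plus the non-neighbour $X$ for at least four mistakes.

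For parts (b) and (c), I would start from the fact that once (a) holds the three flower vertices $Q1_1, Q1_2, R1_1$ all lie outside $C$. Consider $Q1_1$: it has degree $6$ with $P1_1$ absent from its cluster, so at least three of the remaining five neighbours $\{Q1_0, Q1_2, P1_0, R1_0, R1_1\}$ must co-cluster with it. Applying Lemma~\ref{lem:outer} to the adjacent outer petal $P1_0$, and invoking part (a) of the present lemma to $P1_0$ itself, one rules out $P1_0$ sitting with a bouquet vertex, since that would force $P1_0$ (and in particular $Q1_1$'s access to it) outside $Q1_1$'s cluster and leave $Q1_1$ short of neighbours. Hence $P1_0$ must form its diamond $\{P1_0, Q1_0, Q1_1, R1_0\}$, placing $Q1_1$ into it; the symmetric argument on the other side places $Q1_2$ in $\{P1_2, Q1_2, Q1_3, R1_2\}$, establishing (c). For (b), $R1_1$ then has both of its only neighbours $Q1_1, Q1_2$ committed to two different diamond clusters, so every other vertex in $R1_1$'s cluster would be a non-neighbour contributing a mistake on both sides. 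I would then argue that by the structural lemmas every vertex in the rest of $G$ has already been committed to a diamond, bouquet, or singleton role, leaving no spare companion for $R1_1$ and thus forcing it to be a singleton.

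The main obstacle will be the intricate bookkeeping in part (a): every locally overflowing vertex looks rescuable by adding one more neighbour, and the contradiction only surfaces after simultaneously tracking the budgets of $A_1$, $P1_1$, and the flower vertex in question, noting that each rescue move necessarily transfers a non-neighbour mistake onto $A_1$. A secondary delicate point is the singleton conclusion in (b): it requires a global argument that no vertex elsewhere in $G$ can serve as $R1_1$'s companion, which ultimately leans on having already classified every vertex's cluster type via Lemma~\ref{lem:outer} and the earlier parts of the present lemma.
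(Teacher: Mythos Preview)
Your plan for part (a) is sound and close in spirit to the paper's: both do a case analysis showing that pulling any flower vertex into $C$ alongside a bouquet vertex blows some budget. The paper organizes the cases by which subset of $\{A_1,A_2,P2_1,P3_1\}$ accompanies $P1_1$ and then tries to insert $Q1_1$ or $Q1_2$; you instead split on which flower vertex $X$ is the intruder. Either organization works.

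The genuine gap is in your route to (b) and (c). Your pivotal claim is that if the neighbouring outer petal $P1_0$ sits with a bouquet vertex then $Q1_1$ is left ``short of neighbours.'' This is false as stated: after losing $P1_1$ and $P1_0$, the vertex $Q1_1$ still has four available neighbours $\{Q1_0,Q1_2,R1_0,R1_1\}$, and taking any three of them keeps $Q1_1$'s own mistake count at exactly $3$. A contradiction does exist, but it comes from the \emph{other} vertices in such a cluster (e.g.\ $\{Q1_1,Q1_2,R1_1,Q1_0\}$ gives $Q1_2$ five mistakes, and $\{Q1_1,Q1_0,R1_0,R1_1\}$ gives $Q1_0$ five), and you never carry out that check. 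So the case ``$P1_0$ joins a bouquet'' is not eliminated, and with it your derivation of the diamond on the $P1_0$ side collapses. Your fallback for (b)---the global ``every vertex has already been committed'' argument---is also not usable here: at this point in the proof no classification of the petals in other hyperedges has been established, so invoking it is circular.

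The paper avoids $P1_0$ and Lemma~\ref{lem:outer} entirely and proceeds locally. The key step you are missing is to show directly that $Q1_1$ and $Q1_2$ cannot share a cluster once $P1_1$ is gone: each has five remaining neighbours, their only common one is $R1_1$, and beyond $R1_1$ their neighbourhoods are disjoint, so every further insertion that rescues one $Q$ adds a non-neighbour mistake to the other. With $Q1_1\ne Q1_2$'s cluster in hand, the paper proves (b) by supposing $R1_1$ lies with (say) $Q1_2$ and exhausting $Q1_2$'s completions, and then (c) is immediate: $Q1_2$ now has exactly the three missing neighbours $P1_1,Q1_1,R1_1$, so its cluster must absorb all remaining neighbours $P1_2,R1_2,Q1_3$ and nothing else.
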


\begin{proof}
We start with the first part of the claim (we focus on $A_1$ due to symmetry). All possible cases are:

\begin{itemize}
    \item $C$ contains $A_1, A_2, P1_1, P2_1, P3_1$: In this case, each of  $P1_1, P2_1, P3_1$ has 4 mistakes and hence, one of its neighbours has to be in the cluster $C$ which would increase the mistakes of $A_1, A_2$.
    \item $C$ contains $A_1, A_2, P1_1, P2_1$ but not $P3_1$: If we bring one neighbour among $Q1_2$ or $Q1_1$ (say $Q1_1$) into the cluster, then  since $Q1_1$ has 5 other positive neighbours, 2 of them would have to be brought into the cluster as well, leading to an increase in the mistakes of all other non-neighboring vertices. The other case where $C$ contains $A_1,A_2,P1_1,P3_1$ would be symmetric. 
    \item $C$ contains $A_1, A_2, P1_1$ but not $P2_1, P3_1$: The same argument as in the previous case holds.
    \item $C$ contains $A_1, P1_1$: The same argument as in the previous case holds.
    \item $C$ contains $A_1, P1_1, P2_1$: The same argument as in the previous case holds. The case where cluster $C$ contains $A_1, P1_1, P3_1$ is symmetric.
    \item $C$ contains $A_1, P1_1, P2_1, P3_1$ but not $A_2$: This cluster is not possible since $A_2$ has 4 mistakes.
\end{itemize}

For the second part of the claim, we need to argue that if $P1_1$ is together with $A_1$ or $A_2$, then $R1_1$ would form a singleton cluster. We already know that $Q1_1, Q1_2$ cannot go to cluster $C$ containing $P1_1$. Let us say we put $Q1_1, Q1_2$ into the same cluster. Now since both $Q1_1, Q1_2$ have 5 remaining positive neighbours (and hence 5 mistakes), we need to put at least 2 more into the cluster for each of them. The only plausible case is putting $R1_1$ as it is a neighbour of both. Now since the other neighbours of $Q_1, Q_2$ are disjoint, if we try to put one neighbour of each into the cluster, the mistakes would not decrease, hence, this clustering (where $Q1_1, Q1_2$ are put into the same cluster) is not possible. Hence, we know that $Q1_1, Q1_2$ do not belong to the same cluster.

Now, let us say $R1_1$ is not in a singleton cluster. Assume wlog it is with $Q1_2$. As $Q1_2$ already has 2 mistakes corresponding to edges $Q1_1$ and $P1_1$, we need to put 2 additional of its neighbours into the cluster. If we put $Q1_3$ into the cluster, we would also need to put $Q1_3$'s neighbours into the cluster leading to more than 4 mistakes per vertex for $R1_1$. Instead of $Q1_3$, if we put $R1_2$ and $P1_2$, then $P1_2$ has 5 mistakes corresponding to $B_1,B_2,Q1_3,R1_1, R1_2$ and further adding any of its neighbours to the cluster leads to $R1_1$ having 4 mistakes. Hence, $R1_1$ has to be in a singleton cluster. This completes the proof of the second part of the lemma.

For the third part, we need to show that $Q1_2, P1_2, Q1_3, R1_2$ would form a cluster and similarly on the other side. Now, $Q1_2$ already has 3 mistakes corresponding to $P1_1, Q1_1, R1_1$ and hence all its other 3 neighbours need to be added to its cluster, i.e. $P1_2, R1_2, Q1_3$.  Since in this cluster $Q1_2$ already has 3 mistakes corresponding to its 3 neighbours, no other vertex can be added. Hence, the only possibility is the diamond cluster that was claimed. An identical argument can be made for the other side of the polygon.
\end{proof}

 \begin{figure}
    \centering
  \includegraphics[width=10cm]{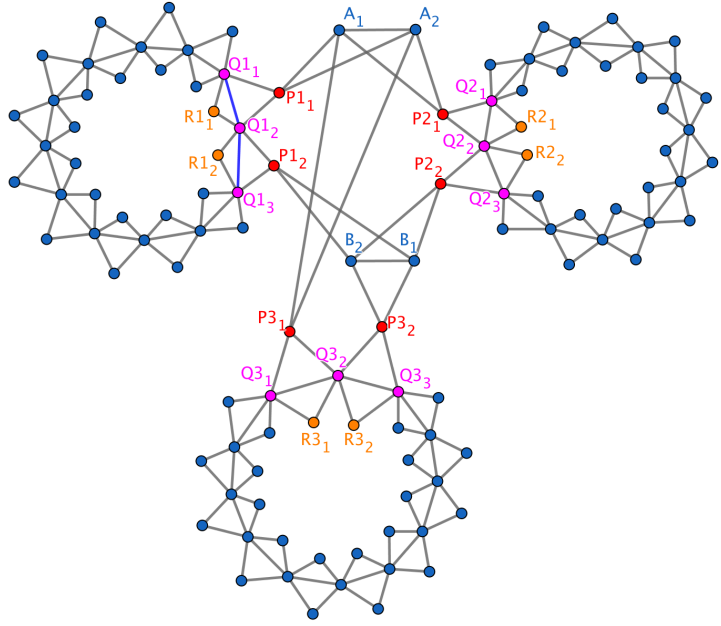}
  \caption{Figure of a bouquet used in the proof of many structural lemmas.}
  \label{fig:detailed}
\end{figure}
\begin{figure}
\centering
  \includegraphics[width=10cm]{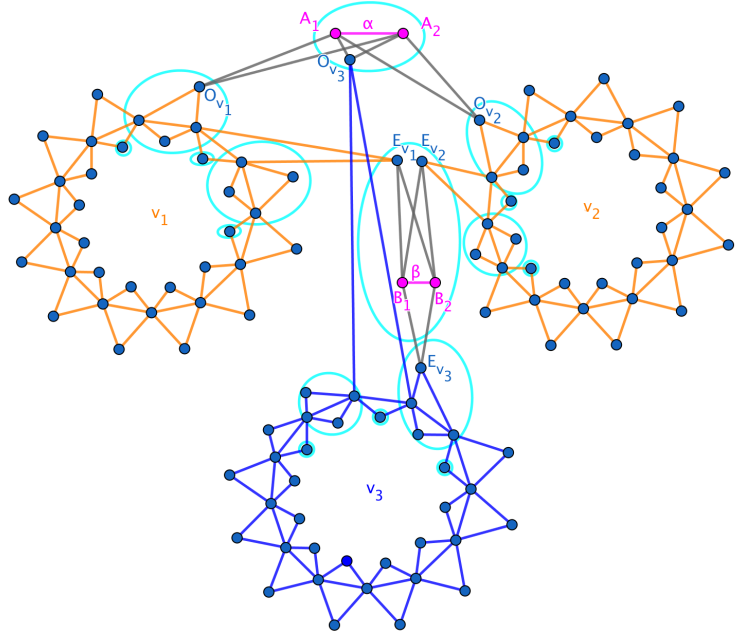}
  \caption{YES case and the clustering with at most 3 mistakes at each node.}
  \label{fig:yes}
\end{figure}

\begin{lemma}\label{lem:outerpetalvertex}
Let $C$ denote the cluster of an outer petal vertex (say $P1_1$). Then, $C$ must be of the form of the following four options to ensure at most $3$ mistakes per vertex. 
\begin{itemize}
    \item[a)] $C$ contains $P1_1, A_1$ (or symmetrically, $C$ contains $P1_1, A_2$).
    \item[b)] $C$ contains $P1_1, A_1, A_2$.
    \item[c)] $C$ contains $P1_1, P2_1,A_1, A_2$ (or symmetrically, $C$ contains $P1_1, P3_1,A_1, A_2$).
    \item[d)] $C$ is a diamond structure containing $P1_1, Q1_1, Q1_2, R1_1$.
\end{itemize}
\end{lemma}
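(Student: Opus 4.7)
The plan is a clean case analysis on the size of $C \cap \{A_1,A_2\}$, invoking Lemmas~\ref{lem:outer} and \ref{lem:singleton} to discharge the extreme cases and then enumerating the one remaining case explicitly. First I would observe that the only positive neighbors of $P1_1$ among the bouquet vertices are $A_1$ and $A_2$ (the $\beta$-endpoints $B_1,B_2$ attach only to even outer petals), so any $B_i$ lying in $C$ is a non-neighbor of $P1_1$ and can only increase its mistake count. With that in hand, the case $C \cap \{A_1, A_2\} = \emptyset$ is immediate from Lemma~\ref{lem:outer}, which forces $C$ to be the diamond $\{P1_1,Q1_1,Q1_2,R1_1\}$, giving option~(d).

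Next I would assume $C \cap \{A_1, A_2\} \neq \emptyset$. Lemma~\ref{lem:singleton}(a) immediately gives that no other vertex of flower $F_1$ lies in $C$; in particular $Q1_1,Q1_2$ are absent, so $P1_1$ has already accrued $2$ mistakes. If only one of $A_1,A_2$ (say $A_1$) is in $C$, then $P1_1$ is at $\ge 3$ mistakes from $Q1_1,Q1_2,A_2$ alone, so $C$ cannot contain any further non-neighbor of $P1_1$. Since every vertex outside $F_1 \cup \{A_1,A_2\}$ is a non-neighbor of $P1_1$, this forces $C = \{P1_1,A_1\}$, which is option~(a).

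In the remaining case both $A_1$ and $A_2$ lie in $C$, so $P1_1$'s current mistake count of $2$ leaves room for \emph{at most one} additional non-neighbor of $P1_1$ in $C$. I would argue that this extra slot, if used, must be filled by $P2_1$ or $P3_1$: any other candidate (any $B_j$, any outer or inner petal from an unrelated flower, any polygonal vertex, any bouquet endpoint from a different hyperedge) immediately violates its own local mistake budget, because its entire positive neighborhood (of size between $2$ and $6$) lies outside the candidate $C$ while all three of $P1_1, A_1, A_2$ are its non-neighbors. Enumerating what can legally occupy the slot leaves exactly $C = \{P1_1, A_1, A_2\}$ (option~(b)) or $C = \{P1_1, P2_1, A_1, A_2\}$ and its $P3_1$-symmetric counterpart (option~(c)); including \emph{both} $P2_1$ and $P3_1$ would give $P1_1$ two non-neighbors in $C$ and push it to $4$ mistakes.

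The main obstacle is not any single calculation but the combinatorial nuisance in this last case: cleanly ruling out the many possible ``stray'' vertices that could in principle be thrown into $C$ from unrelated parts of the graph. The uniform observation above, that any such vertex has its entire positive neighborhood outside $C$ while $P1_1,A_1,A_2$ are all non-neighbors of it, collapses that analysis into a single argument, after which the four listed options are the only survivors.
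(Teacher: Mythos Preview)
Your proof is correct and follows essentially the same approach as the paper: a case analysis on which of $A_1,A_2$ lie in $C$, invoking Lemma~\ref{lem:outer} for the empty case and Lemma~\ref{lem:singleton}(a) to exclude other flower vertices otherwise, then counting $P1_1$'s mistakes to bound the remaining slots. If anything, your uniform argument for ruling out ``stray'' vertices from unrelated parts of the graph is more explicit than the paper's, which in the $\{P1_1,A_1,A_2\}$ case only names $B_1,B_2$ as candidates and argues somewhat informally.
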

\begin{proof}
First of all, note that $P1_1$ has four positive edges, thus it cannot form a singleton cluster. It has to include at least one of its neighbours in its cluster. We consider the following four cases:

\begin{itemize}
    \item $C$ includes $P1_1, A_1$ but does not include $A_2$: From Lemma~\ref{lem:singleton}, we know that the cluster $C$ cannot include any other vertex from the flower structure (like $Q1_1, Q1_2, R1_1$). In cluster $C$, $P1_1$ already has three mistakes and thus, we cannot add any other vertex to this cluster. Symmetrically, another option is cluster $C$ contains $P1_1, A_2$.
    \item $C$ includes $P1_1, A_1, A_2$ but does not include $P2_1, P3_1$: From Lemma~\ref{lem:singleton}, we know that the cluster $C$ cannot include any other vertex from the flower structure (like $Q1_1, Q1_2, R1_1$). The other possible vertices that we could add to the cluster $C$ include $B_1$ or $B_2$, but since $B_1$ and $B_2$ both have 4 positive neighbours each, we will also have to include their neighbours which would drive the mistakes of $P1_1$ up to 4. 
    \item $C$ includes $P2_1, P1_1, A_1, A_2$: From Lemma~\ref{lem:singleton}, we know that cluster $C$ cannot include any other vertex from the flower structure (like $Q1_1, Q1_2, R1_1$). We cannot include any other vertex in $C$ since $P1_1$ already has 3 mistakes in this cluster. Symmetrically, another option is cluster $C$ contains $P1_1, P3_1, A_1, A_2$.
    \item $C$ does not include $A_1, A_2$. From Lemma~\ref{lem:outer}, we know that the only possible option for the cluster $C$ is the diamond structure including $P1_1, Q1_1, Q1_2, R1_1$. 
\end{itemize}
\end{proof}

\begin{lemma}\label{lem:diamond}
In order to have at most 3 mistakes per vertex, for every flower structure, either all the odd vertices form diamond clusters or all the even vertices form diamond clusters.
\end{lemma}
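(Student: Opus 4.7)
The plan is to combine the case analysis of Lemma~\ref{lem:outerpetalvertex} with the structural constraints of Lemma~\ref{lem:singleton} to show that the diamond/non-diamond labeling of outer petals must strictly alternate around the cycle of a flower, and then use the even length $2s_i$ of the cycle to conclude that the alternation respects parity.

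First, I would classify each outer petal $P_k$ ($k = 1, \dots, 2s_i$) of a fixed flower $F_i$ as either a \emph{diamond petal} (case (d) of Lemma~\ref{lem:outerpetalvertex}, with its cluster equal to $\{P_k, Q_k, Q_{k+1}, R_k\}$) or a \emph{non-diamond petal} (cases (a), (b), or (c), where its cluster contains at least one of the bouquet endpoints $A_1, A_2, B_1, B_2$). Lemma~\ref{lem:outerpetalvertex} guarantees that every outer petal falls into exactly one of these two categories under any clustering with at most 3 mistakes per vertex.

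Next, I would establish two complementary propagation rules. (i) If $P_k$ is a diamond petal, its cluster contains the polygonal vertices $Q_k$ and $Q_{k+1}$. Since each polygonal vertex lies in exactly one cluster, the neighboring outer petals $P_{k-1}$ and $P_{k+1}$ cannot themselves be diamond petals: their diamond clusters would require $Q_k$ and $Q_{k+1}$ respectively, which are already committed. Thus both adjacent outer petals must be non-diamond. (ii) Conversely, if $P_k$ is non-diamond, then Lemma~\ref{lem:singleton}(c) asserts that the polygonal vertices on either side of $P_k$ form diamond clusters together with the adjacent outer petals $P_{k-1}$ and $P_{k+1}$ and the corresponding inner petals, so both adjacent outer petals are forced to be diamond petals.

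Combining (i) and (ii) shows that in any valid clustering, the labels (diamond vs.\ non-diamond) alternate strictly along the cyclic sequence $P_1, P_2, \dots, P_{2s_i}$. Since $2s_i$ is even, this alternation is consistent around the cycle and partitions the outer petals exactly by parity of index. Finally, the alternation cannot be degenerate: if every outer petal were non-diamond, rule (ii) would produce diamond petals (contradiction), and if every outer petal were a diamond, rule (i) would already fail at the first pair of adjacent petals. Hence both classes are nonempty, and the only two possibilities are that all odd-indexed outer petals are diamonds (and all even-indexed are non-diamond) or all even-indexed are diamonds (and all odd-indexed are non-diamond), which is precisely the statement of the lemma.

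The only delicate point I anticipate is making sure the ``other side'' diamond claimed in Lemma~\ref{lem:singleton}(c) is correctly identified with the diamond cluster of the neighboring outer petal in the cyclic ordering, so that the forcing rule (ii) really forces both cyclic neighbors of a non-diamond petal to be diamonds; a brief index bookkeeping (noting that $P_k$ is attached to polygonal vertices $Q_k, Q_{k+1}$ and hence has cyclic neighbors $P_{k-1}$ and $P_{k+1}$, both sharing exactly one polygonal vertex with $P_k$) handles this cleanly.
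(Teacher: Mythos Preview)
Your proposal is correct and follows essentially the same approach as the paper: both arguments classify each outer petal via Lemma~\ref{lem:outerpetalvertex} as diamond or non-diamond, then propagate around the cycle using Lemma~\ref{lem:singleton}(c) (your rule~(ii)) together with the observation that adjacent diamonds would have to share a polygonal vertex (your rule~(i), which the paper uses implicitly). Your write-up is in fact slightly more explicit than the paper's, which jumps directly from ``$P1_1$ is a diamond'' to ``$P1_2$ goes with the $\beta$ edges'' without spelling out rule~(i).
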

\begin{proof}
From Lemma~\ref{lem:outerpetalvertex}, we know that either an outer petal vertex forms a diamond structure or it goes with endpoints of the $\alpha$ or $\beta$ edges. Note that it cannot happen that no outer petal vertex in a flower structure forms a diamond structure, because if an outer petal vertex goes with $\alpha$ or $\beta$ edges, then the neighbouring outer petal vertex would have to form a diamond structure (Lemma~\ref{lem:singleton}). Therefore, if one of the odd outer petal vertices $P1_1$ forms a diamond cluster, the neighbouring even outer petal vertices ($P1_2$ and $P1_{s_1}$) would go with the $\beta$ edges. Now, using Lemma~\ref{lem:singleton} and Lemma~\ref{lem:outerpetalvertex}, the odd outer petal vertices ($P1_3$ and $P1_{s_1-1}$) would again form diamond clusters. Continuing in this manner, we get that either all the odd outer vertices form diamond clusters or the even outer petal vertices form diamond clusters.  
\end{proof}

\begin{lemma}\label{lem:notallthree}
In order to have at most 3 mistakes per vertex, for every hyperedge, not all three flower structures corresponding to it can have diamond structures corresponding to even vertices (or odd vertices).
\end{lemma}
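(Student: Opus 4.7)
The plan is to argue by contradiction: suppose every flower $F_{v_1}, F_{v_2}, F_{v_3}$ attached to the bouquet for hyperedge $e = (v_1, v_2, v_3)$ has its even outer petals forming diamond clusters (the odd case is symmetric, swapping the roles of $(\alpha, A_1, A_2, O)$ with $(\beta, B_1, B_2, E)$). Then by Lemma~\ref{lem:diamond} the three odd outer petals $O_{v_1}, O_{v_2}, O_{v_3}$ that are connected to $A_1, A_2$ in this bouquet are not themselves in diamonds, so Lemma~\ref{lem:outerpetalvertex} forces each of them to sit in a cluster of type (a), (b), or (c), i.e.\ in a cluster that contains at least one of $A_1, A_2$.

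The key observation is that Lemma~\ref{lem:outerpetalvertex} also bounds the load on such clusters: any cluster containing an odd outer petal hosts \emph{at most two} of them, and if it hosts two then it must contain \emph{both} $A_1$ and $A_2$ (this is type (c); types (a) and (b) each host only a single outer petal). With this in hand I would finish by pigeonhole on the two $A$-vertices. If $A_1, A_2$ share a single cluster $C$, then every $O_{v_i}$ must land in $C$ (its cluster must contain $A_1$ or $A_2$, and both lie in $C$); but $C$ hosts at most two odd outer petals, contradicting three. If instead $A_1, A_2$ lie in distinct clusters $C_1, C_2$, then no type-(c) cluster is available (it would require both $A$-vertices together), so each of $C_1, C_2$ can absorb at most one odd outer petal, giving a total capacity of two, again contradicting three.

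Both branches reach a contradiction, so at least one of the three flowers must use the opposite diamond parity; the companion ``all odd diamonds'' case follows from the same pigeonhole applied to $B_1, B_2$ and the even outer petals $E_{v_1}, E_{v_2}, E_{v_3}$. The main subtlety is organizational rather than computational: one has to invoke Lemma~\ref{lem:outerpetalvertex} uniformly across the three distinguished petals and extract the right ``at most two petals per $\{A_1,A_2\}$-cluster'' bound from its four-option dichotomy. Once that is done, no further explicit mistake-counting is required — the pigeonhole on the two $A$-vertices versus three petals closes the argument immediately.
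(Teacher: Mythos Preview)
Your proof is correct and follows essentially the same approach as the paper: both reduce to the observation that the three outer petals on the non-diamond parity must each land in a cluster containing an endpoint of the corresponding $\alpha$ (or $\beta$) edge, and then use the classification of Lemma~\ref{lem:outerpetalvertex} to see that two such endpoints cannot accommodate three petals. The paper states this as a one-line appeal to Lemma~\ref{lem:outerpetalvertex} (``this leads to 4 mistakes''), whereas you make the pigeonhole on $\{A_1,A_2\}$ (respectively $\{B_1,B_2\}$) explicit; the paper also happens to phrase the contradiction in terms of the ``all odd diamonds'' case while you start from the symmetric ``all even diamonds'' case, but this is immaterial.
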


\begin{proof}
Consider that for a hyperedge as in Figure~\ref{fig:detailed}, all its three flowers have diamonds corresponding to odd vertices. Therefore, vertices $P1_2$, $P2_2$ and $P3_2$ all have to include at least one vertex among $B_1, B_2$ (Lemma~\ref{lem:outerpetalvertex}). Based on the allowed clusterings for outer petals in Lemma~\ref{lem:outerpetalvertex}, this leads to 4 mistakes.
\end{proof}

\begin{proof}[Proof of Lemma~\ref{lem:main}]
\textbf{YES case:}
In this case, $H$ is 2-colorable and we show how to construct a clustering of $G$ with at most $3$ mistakes per vertex. Let $f:V \to \{Orange, Blue\}$ be the coloring function mapping vertices to colors, such that every hyperedge of $H$ is bichromatic (see also Figure~\ref{fig:yes}). For each flower structure $F_i$, if the corresponding vertex is $Orange$, we pick the $s_i$ diamond structures of the odd collection, otherwise if the vertex is $Blue$, we pick the $s_i$ diamond structures belonging to the even set. If the vertex is $Orange$, we assign all inner petal vertices of the even collection of the flower as singleton clusters, otherwise if the vertex is $Blue$, we assign all inner petal vertices of the odd collection of the flower as singleton clusters.  Since we know that each hyperedge $e_j$ is bichromatic, we assume that two of its vertices $v_{1}, v_{2}$ are colored $Orange$ and the third vertex $v_{3}$ is colored $Blue$.  Then for each hyperedge $e_j$, we can choose two clusters as follows: one is the triangle containing the edge $\alpha_j$ together with its neighbour in $O_{v_3}$ and the other is the diamond containing the edge $\beta_j$ together with its neighbours in $E_{v_1}$ and $E_{v_2}$. For this clustering, each vertex has at most $3$ mistakes, as desired. The case when two of the vertices are colored $Blue$ and the third one $Orange$ would be symmetric.

\textbf{NO case:}
If the initial hypergraph $H$ is not 2-colorable, we show that every possible clustering of the vertices in $G$ incurs at least 4 mistakes on some vertex. In fact, we show the contrapositive: if there exists a clustering that has at most 3 mistakes for every vertex in $G$, then it is possible to construct a 2-coloring of the hypergraph $H$ with all hyperedges being bichromatic. From Lemma~{\ref{lem:diamond}}, we know that every flower structure either has all its odd vertices in the diamond structures, or it has all its even vertices in the diamond structures. Thus, we can color the corresponding vertex $Orange$ if the odd vertices form diamond structures and $Blue$ if the even vertices form diamond structures. Moreover, from Lemma~{\ref{lem:notallthree}}, we know that not all three flowers corresponding to a single hyperedge can have clusters containing all their odd diamonds ($Orange$ vertices) or all their even diamonds ($Blue$ vertices). Hence, if there exists a clustering with at most 3 mistakes per vertex, we can create a bichromatic coloring for every hyperedge and this completes the proof of the main lemma. 
\end{proof}

\begin{proof}[Proof of Theorem~\ref{th:main}]
Now that we have Lemma~\ref{lem:main} it is easy to see that local-CC captures the {\sc Max 2-colorable degree 3-uniform hypergraph} problem. Starting from the latter problem, if a given hypergraph was a YES instance, then the constructed graph by our reduction would have a local-CC solution incurring only 3 mistakes per vertex, otherwise, if it was a NO instance, all clusterings would incur 4 mistakes or more. 
\end{proof}

\section{Inapproximability for Dissimilarity Hierarchical Clustering}

Given a weighted graph $G = (V, E, w)$, let us consider objective~(\ref{eq:obj2}). For simplicity, we can assume that $\sum_{e} w_e = 1$ and that the objective is divided by $n$, so that the optimal value is always at most $1$. We use the following standard definition for correlated Gaussians:
\begin{definition}
Let $\Phi$ be the cumulative distribution function of a standard Gaussian variable (e.g., $\Pr[x \leq \Phi^{-1}(a)] = a$). For $\rho \in [-1, 0]$ and $a, b \in [0, 1]$, we define
\[
\Gamma_{\rho}(a, b) := \Pr[x \leq \Phi^{-1}(a) \wedge y \leq \Phi^{-1}(b)], 
\]
where $(x, y)$ are correlated Gaussians with the covariance matrix $((1, \rho), (\rho, 1))$.
\end{definition}

\begin{theorem} 
It is $\mathbf{UGC}$-hard to approximate the Dissimilarity HC objective~(\ref{eq:obj2}) within a $0.9967$ factor.
\end{theorem}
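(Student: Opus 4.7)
The plan is to reduce from Unique Games (UG) via a Long Code based dictator test on the noisy Boolean cube, adapting the Khot-Kindler-Mossel-O'Donnell template for \cut hardness to the Dissimilarity HC objective. I would fix a correlation parameter $\rho \in (-1, 0)$ and a large integer $R$. For every vertex $u$ of a UG instance $\mathcal{I}$ I would place one copy of the cube $\{-1, +1\}^R$, weight a pair $(x, y)$ within a cube by the probability that $y$ arises from $x$ under $\rho$-correlated bit noise, and couple distinct cubes along each UG edge via coordinate permutations coming from the UG constraint, exactly as in KKMO. This gives a polynomial size weighted graph that will serve as the HC instance; the use of $0$-$1$ weights claimed in the theorem is then achieved by a standard uniform sampling of edges proportional to weights.

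For the completeness direction, I would use a good labelling $\sigma$ of $\mathcal{I}$ to define a binary tree $T^{\star}$ that first splits each cube along its intended dictator coordinate $\sigma(u)$, and then recursively splits the two halves along further dictator coordinates. Each noise-correlated edge is separated at the first level with probability $(1-\rho)/2$ and, conditional on surviving a level, with probability $(1-\rho)/2$ at the next level, and so on. Weighting these contributions by the subtree sizes $s_v$ and summing over the levels of $T^{\star}$ produces a clean closed-form lower bound on the YES HC value as a function of $\rho$.

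For the soundness direction, I would express the HC objective of any tree $T$ as the level-sum $\sum_{v} s_v \cdot \text{cut}(v)$, where $v$ ranges over internal nodes of $T$, $s_v$ is the subtree size at $v$, and $\text{cut}(v)$ is the weight of edges separated by the two children of $v$. At each internal node $v$, the two children induce a bipartition of the cube nodes sitting in $T_v$; writing the indicator of one side in Fourier form across the cubes and applying the MOO invariance principle, either some coordinate of some cube has non-negligible influence, in which case we decode a UG labelling and contradict the NO assumption on $\mathcal{I}$, or the cut value is bounded by the Gaussian half-space quantity $1 - \Gamma_{\rho}(\alpha_v, \alpha_v) - \Gamma_{\rho}(1 - \alpha_v, 1 - \alpha_v)$ for some $\alpha_v \in [0,1]$. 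Cascading this bound level by level gives a Gaussian upper bound on the HC value in the NO case.

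The final constant $\tfrac{9159}{9189}$ then emerges by maximizing the ratio of the YES HC value to the soundness upper bound, jointly over $\rho$ and over the worst-case sequence of balance parameters $\{\alpha_v\}$ that any tree on the cube could pick. The main obstacle is that, unlike in \cut, the HC objective couples every level of the tree through the weight factor $s_v$: the soundness argument cannot stop at the root and must invoke the invariance principle at every internal node while simultaneously tracking how the uncut edge mass redistributes between the two subtrees. Proving that the resulting recursive Gaussian bound telescopes cleanly, and carrying out the one-dimensional calculus over $\rho$ to arrive at the specific ratio $9159/9189$, is the most delicate part of the argument.
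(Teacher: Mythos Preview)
Your completeness argument has a real gap. After the first dictator split along $\sigma(u)$, you propose to ``recursively split the two halves along further dictator coordinates,'' but a single good UG labelling gives you exactly one consistent coordinate per cube. For the second level you would need, for every UG edge $(u,v)$ with permutation $\pi$, a \emph{second} pair of coordinates $(i',j')$ with $\pi(i')=j'$ that is globally consistent across all cubes --- i.e.\ a second good labelling --- and a general UG instance offers no such thing. The paper fixes this by reducing not from an arbitrary UG instance but from \textsc{Max-2Lin}$(q)$, whose constraints are all shifts $x_j-x_i=a$; then every shift $\sigma_c:=\sigma+c$ of a good assignment is again good, yielding $q$ mutually independent dictator coordinates. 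The YES tree splits at level $r$ according to $\sigma_r$, and independence across shifts (equation~\eqref{eq:ind} in the paper) makes the per-level cut fractions multiply, giving the geometric sum $\sum_{r\ge 1}\alpha\,((1-\alpha)/2)^{r-1}=\alpha/(1-(1-\alpha)/2)$ with $\alpha=(1-\rho)/2$. At $\rho=-0.7$ this is $\ge 0.9189$.

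Your soundness plan is also much heavier than what is needed and, as stated, problematic: once you pass to an internal node $v$, the leaves under $v$ are an arbitrary subset of $X\times\{\pm1\}^{\Z_q}$ with no product structure, so re-running MOO there is not well-defined, and ``telescoping'' a recursive Gaussian bound across all levels is exactly the hard part you have not addressed. The paper avoids all of this. It uses the KKMO soundness once, in the form $(\ast)$: every $S\subseteq V$ with $|S|=\beta n$ induces weight at least $\Gamma_\rho(\beta,\beta)$. For any tree $T$ and any node $v$ with $\beta_v=|T_v|/n$, the edges induced in $T_v$ are multiplied by at most $\beta_v n$, so the HC value divided by $n$ is at most $(1-\Gamma_\rho(\beta_v,\beta_v))+\beta_v\Gamma_\rho(\beta_v,\beta_v)$. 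A short case analysis on a single node (or on the two children of one node, when no $\beta_v$ falls in $[0.6,0.88]$) then bounds the NO value by $0.9159$ at $\rho=-0.7$, and $9159/9189$ is the resulting gap. No level-by-level invariance, no recursion, no optimization over sequences $\{\alpha_v\}$ is required.
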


\begin{proof}
The starting point of our proof is the hardness result of \cut and \lin by Khot et al.~\cite{khot2007optimal}. 
The \lin problem is defined as follows:
\begin{itemize}
    \item Variables: $X = \{x_1, \dots, x_{n_0} \}$ where each $x_i$ can take a value from $\Z_q$. 
    \item Input: A set of equations where the $j$th equation is of the form $x_{j,1} - x_{j,2} = a_j$ for $x_{j,1}, x_{j,2} \in X$ and $a_j \in \Z_q$. 
    \item Goal: Find an assignment $\sigma : X \to \Z_q$ that maximizes the number of satisfied equations. Let the value of an instance to be the maximum fraction of equations satisfied by any assignment. 
\end{itemize}

Khot et al. showed that the hardness of \lin is equivalent to the original UGC. More precisely, assuming UGC, for any $\eps > 0$, there exists $q \in \N$ such that it is $\mathbf{NP}$-hard to distinguish whether a given instance of \lin has value at least $1 - \eps$ or at most $\eps$. 
In the YES case, there exists an assignment $\sigma : X \to \Z_q$ satisfying at least an $1-\eps$ fraction of constraints. We crucially use the fact that any {\em shift} of $\sigma$ also achieves the same value; for any $a \in \Z_q$, the shifted assignment $\sigma_a(x) := \sigma(x)+a$ satisfies exactly the same constraints as $\sigma$. 
Furthermore the result holds even when the \lin instance is {\em regular} --- each variable in $X$ is contained in the same number of equations. 

The same paper~\cite{khot2007optimal} proved a hardness for \cut (under UGC), by giving a reduction from Unique Games to \cut. We apply this reduction, but starting from the aforementioned hard instance of \lin instead of an arbitrary Unique Games instance. The resulting instance for \cut is our hard instance for HC. Our reduction is parameterized by $\rho$ (we later set $\rho = -0.7$). It produces a weighted graph whose vertex set is $V = X \times  \{ \pm 1 \}^{\Z_q}$, where $X$ is the set of variables for \lin. 
For each pair $(u, v) \in \binom{V}{2}$, the weight of $(u, v)$ to be the defined to be the probability that it is sampled in the following following procedure. 

\begin{itemize}
    \item Sample a random variable $x_i$. Then, sample two random constraints involving $x_i$ (say $x_j - x_i = a, x_k - x_i = b$). Next, sample $f, g \in \{\pm1\}^{\Z_q}$ such that for every $r \in \Z_q$ independently, $f_{r+a}$ and $g_{r+b}$ are sampled from the $\rho$-correlated, mean-zero $\{ \pm 1 \}$ distribution (two values are the same with probability $\tfrac{1 + \rho}{2}$). Finally, output the pair $\{ (x_j, f), (x_k, g)\}$.
\end{itemize}

Let $n$ be the number of vertices in the final instance. In the YES case, there exists an assignment $\sigma : X \to \Z_q$ that satisfies at least an $(1 - \eps)$ fraction of constraints of \lin for an arbitrary constant $\eps > 0$ we can choose. Since the instance for \lin is regular, sampling a random $x_i$ and a random constraint $x_j - x_i = b$ involving $x_i$ is the same as sampling a uniformly random equation. By union bound, the probability that $\sigma$ satisfies both is at least $1 - 2\eps$. 

Consider the above sampling procedure given 
$x_i$ and two constraints $x_j - x_i = a$, and $x_k - x_i = b$.
When $\sigma$ satisfies both constraints, 
the probability that $f_{\sigma(x_j)} = g_{\sigma(x_k)}$ is exactly $(1 + \rho) / 2$. 
Since $\sigma_c := \sigma + c$ for any $c \in \Z_q$ satisfies the same set of constraints, the same statement holds for $\sigma_c$ as well. Moreover, the coordinates of $f$ and $g$ are sampled independently given $r$, for any $C \subseteq \Z_q$,
\begin{equation}
\Pr \bigg[ f_{\sigma_c(x_j)} = g_{\sigma_c(x_k)} \mbox{ for all } c \in C \bigg] = \bigg( \frac{1+\rho}{2} \bigg)^{|C|}
\label{eq:ind}
\end{equation}

Let $S_{c, 1} := \{ (x_i, f) : f_{\sigma_c(x_i)} = 1 \}$ and 
$S_{c, -1} := \{ (x_i, f) : f_{\sigma_c(x_i)} = -1 \}$. For every $c \in \Z_q$, $(S_{c, 1}, S_{c, -1})$ is a partition of $V$. Let $T$ be the tree such that the root is at level 0 and for each $r \in [1, q]$, each node in the $r$th level is labeled by $b = \{ \pm 1 \}^r$ and corresponds to $\cap_{i=1}^r S_{i, b_i}$. (We use the natural mapping between $[1, q]$ and $\Z_q$ with $q = 0$.) 
For $1 \leq r < q$, a node in the $r$th level becomes the parent of a node in the $(r+1)$th level if the label of the former is a prefix of the label of the latter. The remaining levels of $T$ can be constructed arbitrarily. This gives a perfectly balanced binary tree up to $q$th level. Due to~\eqref{eq:ind}, the total weight of edges split in the $r$th level is at least $(1 - 2\eps) \cdot ((1+\rho)/2)^{r-1} \cdot ((1-\rho)/2)$. Let $\alpha = (1 - \rho)/2 = 0.85$. In the objective function for HC, these edges are multiplied by the number of vertices corresponding to a node in the $(r-1)$th level, which is $n \cdot 2^{1-r}$. Therefore, the objective value for $T$ is at least
\begin{align*}
\sum_{r = 1}^{q} 
\bigg(
(1-2\eps)(1-\alpha)^{r-1} \alpha \cdot
n 2^{1-r} \bigg)
=
n (1 - 2\eps) \sum_{r = 1}^{q} \alpha \bigg(\frac{1 - \alpha}{2}\bigg)^{r - 1}= n (1 - 2\eps) \bigg( \frac{\alpha}{1 - (1- \alpha)/2} - 2^{\Omega(-q)}\bigg).
\end{align*}
We set $\rho = -0.7$, which is close to $\mathrm{argmin}_\rho \tfrac{\arccos \rho / \pi}{(1-\rho)/2} \approx -0.689$ used for the optimal $0.878$-hardness of \cut.
This yields $\alpha = 0.85$, and make the objective function value at least $0.9189n$ for small enough $\eps > 0$ and large enough $q \in \N$.

Now we analyze the objective value for HC in the NO case. In the NO case, Khot et al.~\cite{khot2007optimal} showed the following statement holds for any small constant $\eps > 0$. 
\[
    \mbox{Any subset } S \subseteq V \mbox{ with }|S| = \beta |V| \mbox{ induces edges of weight at least } \Gamma_{\rho}(\beta, \beta) - \eps.
\label{eq:kkmo}
\tag{*}
\]

For simplicity, we ignore $\eps$ in the later calculations as all of them hold for small enough $\eps$. 

Let $T$ be any tree. For a node $v$, let $\beta_v$ be the number of leaves in the subtree rooted at $v$ divided by $n$. 
 Then~\eqref{eq:kkmo} implies that edges of weight at least $\Gamma_{\rho}[\beta_v, \beta_v]$ will be multiplied by $\beta_v$ in the objective function. Even when all the other edges (of total weight at most (1 - $\Gamma_{\rho}[\beta_v, \beta_v]$)) are multiplied by $n$, the total objective function value for $T$ divided by $n$ is at most 
\begin{equation}
(1 - \Gamma_{\rho}[\beta_v, \beta_v]) + \beta_v \cdot \Gamma_{\rho}[\beta_v, \beta_v].
\label{eq:plot1}
\end{equation}

To upper bound the value of $T$, we consider the following two scenarios and show that the value is small in both cases:
\begin{itemize}
\item
If there exists $v$ such that $\beta_v \in [0.6, 0.88]$, we can confirm from the left of Figure~\ref{fig:plot1} that the value of~\eqref{eq:plot1} is at most $0.909$ --- it is maximized when $\beta_v = 0.88$ and all other $\beta_v$ in the interval achieves a smaller value. 


\begin{figure}
    \centering
    \begin{minipage}{0.5\textwidth}
        \centering
        \includegraphics[width=0.75\textwidth]{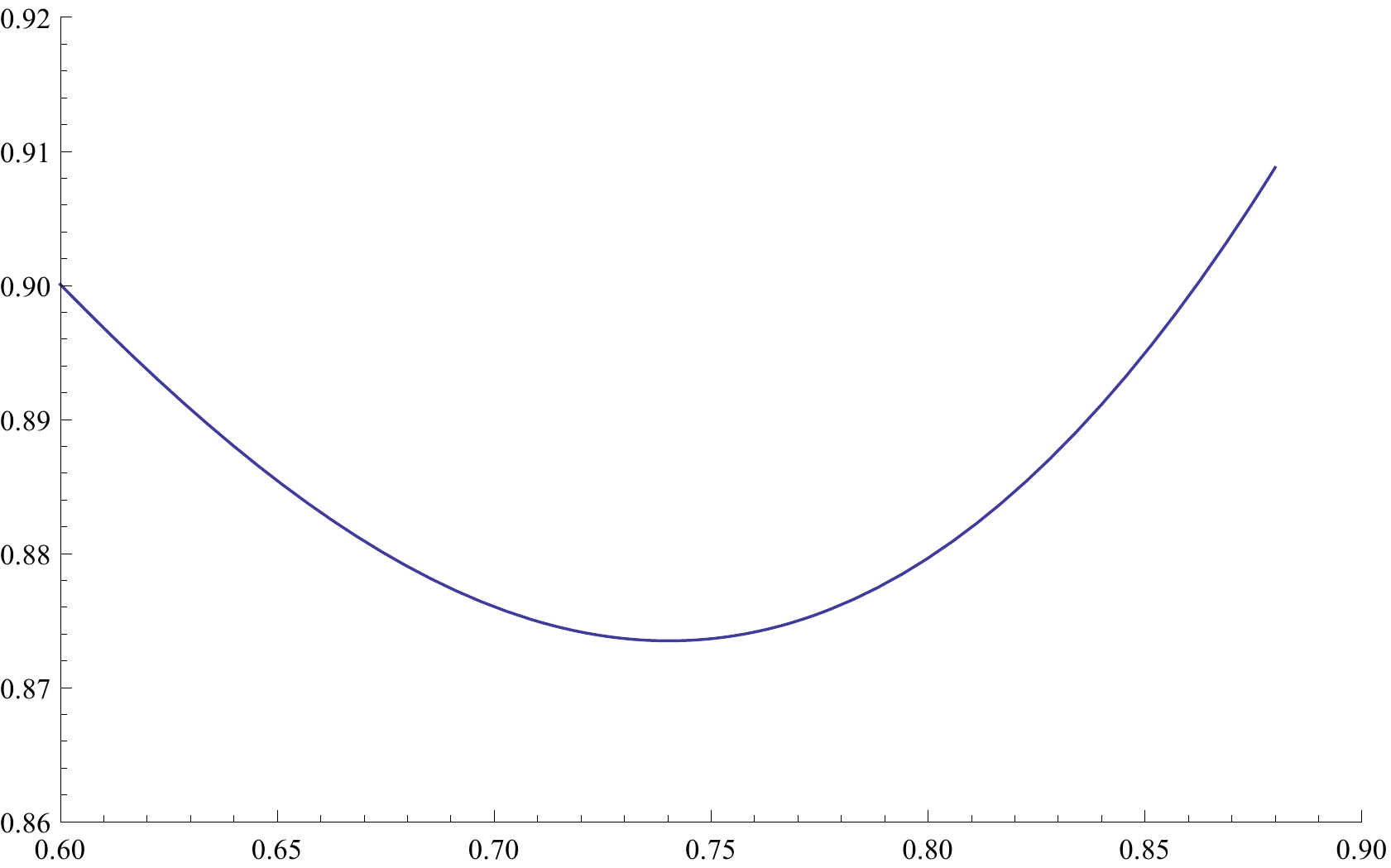}
       
    \end{minipage}\hfill
    \begin{minipage}{0.5\textwidth}
        \centering
        \includegraphics[width=0.75\textwidth]{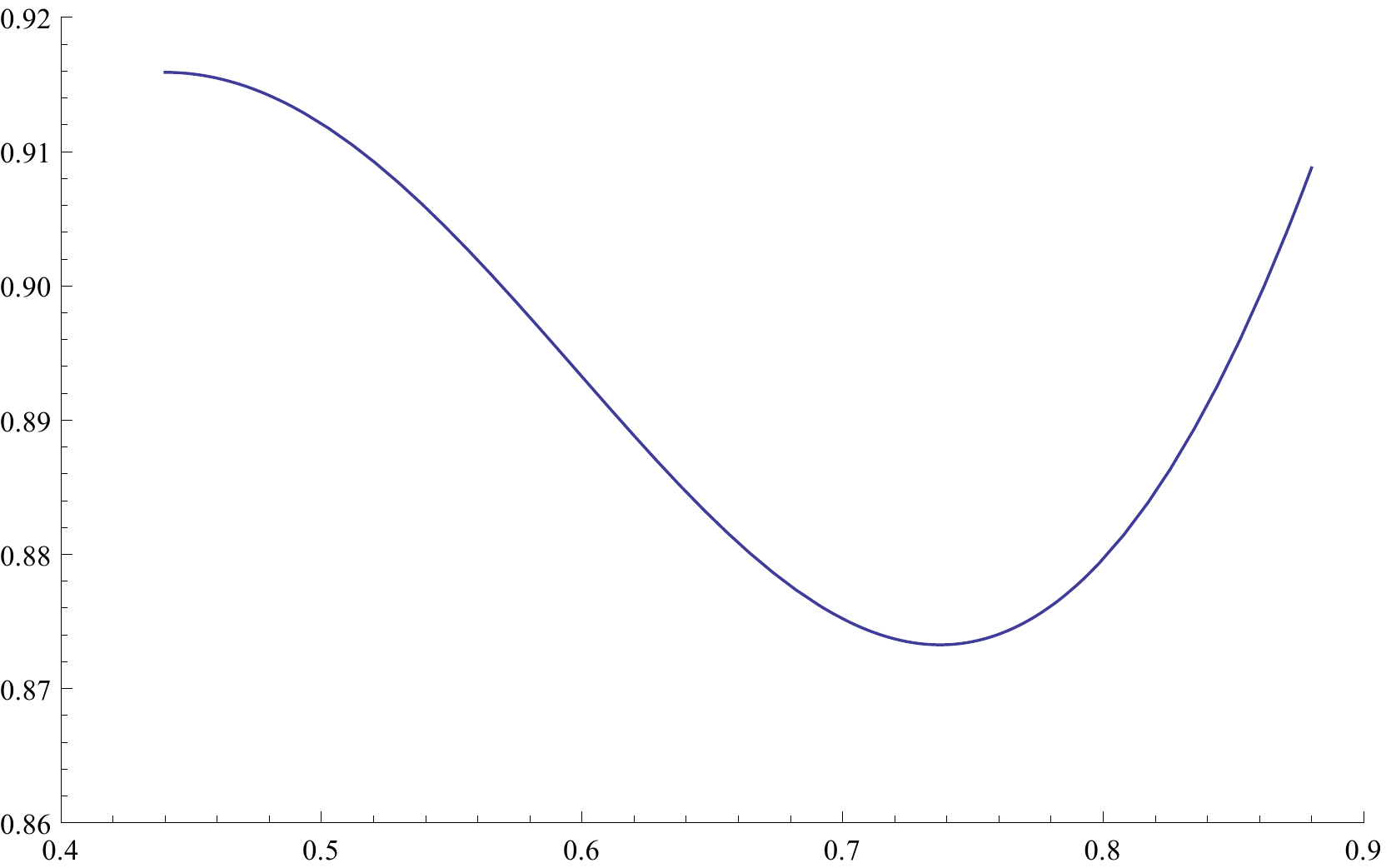}
        
    \end{minipage}
    \caption{(Left) Plot of~\eqref{eq:plot1} for $\beta_v \in [0.6, 0.88]$. (Right) Plot of~\eqref{eq:plot2} for $\beta_{u_2} = 0.88 - \beta_{u_1}$, $\beta_{u_1} \in [0.44, 0.88]$.} \label{fig:plot1}\label{fig:plot2}
\end{figure}

\item Suppose that no such $v$ with $\beta_v \in [0.6, 0.88]$ exists. Let $u$ be the node with the smallest $\beta_u$ among vertices with $\beta_u > 0.88$. Let $u_1, u_2$ be the children of $u$. By definition, $\beta_{u_1}, \beta_{u_2} < 0.6$, which means that both $\beta_{u_1}, \beta_{u_2} > 0.28$.  Letting $\Gamma_i := \Gamma_{\rho}[\beta_{u_i}, \beta_{u_i}]$ for $i = 1, 2$, again by~\eqref{eq:kkmo}, 
the edges of the HC instance induced by the vertices in the subtree of $u_i$ have total weight at least $\Gamma_i$ and will be multiplied by  $\beta_{u_i} n$. Similarly to~\eqref{eq:plot1}, even when all the other edges are multiplied by $n$, the objective function value divided by $n$ has value at most:
\begin{equation}
(1 - \Gamma_1 - \Gamma_2) + \beta_{u_1} \Gamma_1 + \beta_{u_2} \Gamma_2.
\label{eq:plot2}
\end{equation}
Since $\Gamma_{\rho}(\beta)$ is monotonically increasing in $\beta$, the above expression is monotonically decreasing in $\beta_{u_1} + \beta_{u_2}$, so it is maximized when $\beta_u = \beta_{u_1} + \beta_{u_2} = 0.88$. 
Without loss of generality assume $\beta_{u_1} \geq \beta_{u_2}$, so that $\beta_{u_1} \in [0.44, 0.88]$.
From the right of Figure~\ref{fig:plot2}, 
when $\beta_{u_1} \in [0.44, 0.88]$, the value of~\eqref{eq:plot2} is at most $0.9159$, maximized when $\beta = 0.44$. 
\end{itemize}

Therefore, in both scenarios the objective function value is at most $0.9159n$, strictly less than the YES case where the value is $0.9189n$. This proves the desired hardness to within a $\tfrac{9159}{9189}$ factor.
\end{proof}

\bibliographystyle{alpha}
\bibliography{main}
\end{document}